\newtheorem{theorem}{Theorem}
\newtheorem{lemma}{Lemma}
\newtheorem{prop}{Proposition}
\newtheorem{cor}{Corollary}
\newtheorem{remark}{Remark}
\lbrace\begin{array}{@{}l@{}}}%
\def\E{\mathcal{E}}
\def\br{\bm r}
\def\brd{\bm{\dot r}}
\def\brdd{\bm{\ddot r}}
\def\bp{\bm p}
\def\bpd{\bm{\dot p}}
\def\l{\ell}
\def\L{L}
\def\H{\mathcal H}
\def\bI{{\bf I}}
\def\bphi{{\bm \varphi}}
\def\g{\mathtt g}
\def\s{\mathtt s}
\def\n{\mathtt n}
\def\k{\mathtt{k}}
\def\T{\mathbb{T}}
\def\N{\mathbb{N}}
\def\eps{\epsilon}
\def\Z{\mathbb{Z}}
\def\A{\mathcal{A}}
\def\W{\mathcal{W}}
\def\G{\mathcal{G}}
\def\D{\mathcal{D}}
\def\Y{\mathcal{Y}}
\def\X{\mathcal{X}}
\def\bl{\boldsymbol{\ell}}
\def\bg{\boldsymbol{g}}
\def\bz{\boldsymbol{z}}
\def\bL{\boldsymbol{L}}
\def\bG{\boldsymbol{G}}
\def\bZ{\boldsymbol{Z}}
\def\ov{\overline}
\begin{document}
\title{Long term dynamics for the restricted $N$-body problem with
   mean motion resonances and crossing singularities}

\author[1,2]{Stefano Mar\`o} 
\author[2]{Giovanni F. Gronchi}

\affil[1]{Instituto de Ciencias Matem\'aticas (CSIC-UAM-UCM-UC3M), Madrid, Spain, \protect\\ email: stefano.maro@icmat.es}
\affil[2]{Dipartimento di Matematica, Universit\`a di Pisa, Italy, \protect\\  email: giovanni.federico.gronchi@unipi.it}

\date{}                     
\setcounter{Maxaffil}{0}
\renewcommand\Affilfont{\small}

\maketitle

\abstract{We consider the long term dynamics of the restricted
  $N$-body problem, modeling in a statistical sense the motion of an
  asteroid in the gravitational field of the Sun and the solar system
  planets. We deal with the case of a mean motion resonance with one
  planet and assume that the osculating trajectory of the asteroid
  crosses the one of some planet, possibly different from the resonant
  one, during the evolution.  Such crossings produce singularities in
  the differential equations for the motion of the asteroid, obtained
  by standard perturbation theory. In this work we prove that the
  vector field of these equations can be extended to two locally
  Lipschitz-continuous vector fields on both sides of a set of
  crossing conditions. This allows us to define generalized solutions,
  continuous but not differentiable, going beyond these
  singularities. Moreover, we prove that the long term evolution of
  the 'signed' orbit distance (Gronchi and Tommei 2007) between the
  asteroid and the planet is differentiable in a neighborhood of the
  crossing times.  In case of crossings with the resonant planet we
  recover the known dynamical protection mechanism against collisions.
  We conclude with a numerical comparison between the long term and
  the full evolutions in the case of asteroids belonging to the
  'Alinda' and 'Toro' classes (Milani et al. 1989).  This work extends
  the results in (Gronchi and Tardioli 2013) to the relevant case of
  asteroids in mean motion resonance with a planet.}

\section{Introduction}

It is well known that for $N\geq 3$ the $N$-body
problem is not integrable, even in the restricted case.
In particular, the evolutions of near-Earth
asteroids (NEAs) have short Lyapunov times, beyond which the orbit
computed by numerical techniques and the true orbit are completely
uncorrelated \cite{Whipple}.
However, we can obtain statistical information on the
long term evolution by considering a normal form of
the Hamiltonian of the problem, where we try to filter out the short
periodic oscillations. More precisely, we would like to eliminate the
dependence on the fast angles from the first order part of the
Hamiltonian \cite{akn}.  Outside mean motion resonances this program can
be successfully completed and
corresponds to averaging Hamilton's equations over the mean anomalies
of the asteroid and the planets. In case of mean motion resonances,
the resonant combination of the mean anomalies is a slow angle and
must be retained in the normal form.

In both cases, the elimination of the fast angles is usually obtained
through a canonical transformation, in the spirit of classical
perturbation theory. However, the intersections between the
trajectories of the asteroid and the planets introduce
  singularities in the standard procedure. Actually, even the
coefficients of the Fourier series expansion of the generating
function are not defined in a neighborhood of
  crossings. On the other hand, since the trajectory of a near-Earth
asteroid is likely to cross the trajectory of the Earth, we
cannot avoid to deal with these problems.
Note that the minimal distance between the trajectories of an asteroid
and a planet is crucial in the study of possible Earth
impactors. Actually, a small value of this quantity, that we denote by
$d_{min}$, is a necessary condition for an impact. An orbit crossing
singularity occurs whenever $d_{min}=0$.
 
After the preliminary study by Lidov and Ziglin \cite{lidzig74}, in
the case of orbits uniformly close to a circular
one, the problem of averaging over crossing orbits
was studied in \cite{gm98}. Here the authors assumed the orbits of the
planets being circular and coplanar, and excluded mean motion
resonances and close approaches with them.  In \cite{g02CMDA} the
results were extended to the case of non-zero eccentricities and
inclinations. In these works, the main singular term
  is computed through a Taylor expansion centered at the mutual nodes
  of the osculating orbits. These results were improved in
\cite{gt13}, where the main singular term is expanded at the minimum
distance points (see Section~\ref{sec:singularity}) and where it is
proved that the averaged vector field admits two different
Lipschitz-continuous extensions in a neighborhood of almost every
crossing configuration.  The latter property allows us to define a
generalized solution, representing the secular evolution of the
asteroid, that is continuous but not differentiable at crossings.
Moreover, one can suitably choose the sign of $d_{min}$ and obtain a
map $\tilde{d}_{min}$ that is differentiable in a neighborhood of
almost all crossing configurations \cite{gt07}.  The secular evolution
of $\tilde{d}_{min}$ along the generalized solutions turns out to be
differentiable in a neighborhood of the singularity.

The basic model considered in these works comes from the averaging
principle. Therefore, it is assumed that the dynamics is not affected
by mean motion resonances. However, the population of resonant NEAs is
not negligible.
Moreover, mean motion resonances are considered responsible for a
relatively fast change in the orbital elements leading some asteroids
to cross the planet trajectories \cite{wisdom}. Hence it is important
to extend the analysis to such asteroids, which is
  the purpose of this paper.

For the resonant case, the averaging process suffers the presence of
small divisors. Hence, the dependence on the mean anomalies cannot be
completely eliminated, and the terms corresponding to their resonant
combination still appear in the resonant normal form, see
(\ref{normalform}).  We observe that in this relation the averaged
Hamiltonian considered in \cite{gt13} is still present. However, a new
term
appears in the form of a Fourier series, that we truncate to some
order $n_{\rm max}$. This term, denoted by $\H_{res}^{n_{\rm
    max}}$, is singular at orbit crossings and needs to be studied.
Another difference with the non-resonant case is that the semimajor
axis of the asteroid orbit is not constant, and the
number of state variables to consider in the equations is six.  

We will prove that, despite these differences, the vector field of the
resonant normal form computed outside the singularities admits two
different locally Lipschitz-continuous extensions on both sides of a
set of crossing conditions, as in \cite{gt13}. We can also
define generalized solutions, continuous but not differentiable, going
beyond the crossing singularities and
the long term evolution of the map $\tilde{d}_{min}$ along these
solutions is differentiable in a neighborhood of crossings. 

The analysis of the singularity is performed in two different ways,
depending if the crossed planet is the one in mean motion resonance
with the asteroid or not.
In case of crossings with the resonant planet we show that, in the
limit for $n_{\rm max}\to\infty$, we recover the known dynamical
protection mechanism against collisions between the asteroid and the
planet \cite{milbac98}.

The article is organized as follows. In Section \ref{sec:av_eq} we 
derive the equations of the long term dynamics outside the crossing
singularities for a given mean motion resonance.
In Section \ref{sec:distance} we recall the definition of the signed
orbit distance $\tilde{d}_{min}$.  The main results are stated and
proved in Section \ref{sec:singularity}. In Section \ref{sec:gen} we
define the generalized solutions and prove the regularity of the
evolution of $\tilde{d}_{min}$.  In Section~\ref{sec:dynpr} we show
the relation between the resonant normal form that we use and the
averaged Hamiltonian used in the literature, recovering the dynamical
mechanism that protects from collisions. We conclude with some numerical
examples in Section \ref{sec:numerics}, showing the agreement between
the long term evolution and the full evolution in a statistical sense.

\section{The equations for the long term evolution}\label{sec:av_eq}

We consider the differential equations
\begin{equation}
\brdd = -\k^2\frac{\br}{|\br|^3}+\k^2\sum_{j=1}^{N-2} \mu_j\left(
\frac{\br_j-\br}{|\br_j-\br|^3}-\frac{\br_j}{|\br_j|^3} \right),
\label{Nbodyeq}
 \end{equation}
where $\br$ describes, in heliocentric coordinates, the motion of a
massless asteroid under the gravitational attraction of the Sun and
$N-2$ planets. The heliocentric motions of the planets $\br_j =
\br_j(t)$ are known functions of the time $t$ that never vanish: that
is we exclude collisions between a planet and the Sun. Moreover, $\k =
\sqrt{\mathcal{G}m_0}$ is Gauss's constant, $\mu_j= m_j/m_0$ with
$m_0$ the mass of the Sun and $m_j$ the mass of the $j$-th planet.
Equations (\ref{Nbodyeq}) can be written in Hamiltonian form as
\begin{equation*}
\bpd = -\frac{\partial \H}{\partial \br}, \hskip 1cm
\brd = \frac{\partial \H}{\partial \bp} = \bp, 
\end{equation*}
with Hamiltonian
\begin{equation}
\H(\bp,\br,t) = \frac{|\bp |^2}{2}-\frac{\k^2}{|\br|}-\k^2\sum_{j=1}^{N-2}
\mu_j\left( \frac{1}{d_j(\br,t)}-\frac{\br\cdot\br_j(t)}{|\br_j(t)|^3} \right).
\label{hamorig}
\end{equation}
In (\ref{hamorig}) $d_j=|\br_j-\br|$ stands for the distance between
the asteroid and the $j$-th planet.
We use Delaunay's elements $(L,G,Z,l,g,z)$ defined by 
\begin{align*}
\L &=\k\sqrt{ a},  &\l = \n(t-t_0), \\ 
G &=\k\sqrt{a(1-e^2)},      & g =\omega, \\
Z &=\k\sqrt{a(1-e^2)}\cos I,    & z =\Omega,
\end{align*}
where $a,e,I,\Omega,\omega,t_0$ represent semimajor axis,
eccentricity, inclination, longitude of the ascending node, argument
of perihelion, and epoch of passage at perihelion. For the definition
of $\l$ we use the mean motion
\[
\n=\frac{\k^{4}}{\L^3}.
\]
In these coordinates, the Hamiltonian (\ref{hamorig}) can be written as
\begin{equation*}
\H = \H_0 + \eps\H_1,
\label{hamdel}
\end{equation*}
with $\eps =\mu_5$,
\begin{equation*}
\H_0 =  -\frac{\k^4}{2\L^2}, 
\end{equation*}
and
\begin{equation}
\H_1 = \sum_{j=1}^{N-2} \H_1^{(j)},
\hskip 1cm
\H_1^{(j)} = -\k^2\frac{\mu_j}{\mu_5}\left(
\frac{1}{d_j}-\frac{\br\cdot\br_j}{|\br_j|^3} \right),
\label{H1}
\end{equation}
and $\br_j=\br_j(t)$.
Note that in (\ref{H1})
\[
\H_1 = \H_1(\L,G,Z,\l,g,z,t).
\]
To eliminate the dependence on time in $\H_1$
we overextend the phase space. We assume that the planets move on
quasi-periodic orbits with three independent frequencies $\n_j,\g_j,\s_j$.\\ 
This is the case considered by Laplace (see for example \cite{morby}), where the mean
semi-major axis $a_j$ is constant and the mean value of the mean
anomaly $\l_j$ grows linearly with time, i.e. up to a phase,
$\l_j=\n_jt$. 
Here $\n_j$ is
the mean motion of planet
$j$. Moreover, every planet is characterized by two more frequencies
$\g_j,\s_j$, describing the slow motions of the other mean orbital
elements.
 We introduce the angles
\[
\l_j=\n_jt + \ell_j(0), \quad 
g_j=\g_j t + g_j(0), \quad 
z_j=\s_j t + z_j(0)
\]   
and their conjugate variables $L_j, G_j, Z_j$. 

Note that these variables do not correspond to the Delaunay's elements
of planet $j$, since they are functions of the orbital elements of the
asteroid and planet $j$.  We use the following notation:
\begin{eqnarray*}
\bl = (\ell,\ell_1\dots,\ell_N), &\bg = (g,g_1\dots,g_N), & \bz = (z,z_1\dots,z_N), \\
\bl_j = (\ell,\ell_j), & \bg_j = (g,g_j), & \bz_j = (z,z_j)
\end{eqnarray*}
and analogously we define $\bL,\bG,\bZ,\bL_j,\bG_j,\bZ_j$.

The dynamics in this overextended phase space is determined by the
autonomous Hamiltonian
\begin{equation*}
\widetilde{\H} = -\frac{\k^4}{2\L^2} + \sum_{j=1}^{N-2}(\n_j\L_j+ \g_jG_j+
\s_jZ_j) + \eps\widetilde{\H}_1(\L,G,Z,\bl,\bg,\bz),
\end{equation*}
where
\[
\widetilde{\H}_1 = \sum_{j=1}^{N-2}\tilde{\H_1}^{(j)}, 
\hskip 1cm
\widetilde{\H}_1^{(j)}= -\k^2\frac{\mu_j}{\mu_5} \left( \frac{1}{\tilde{d}_j}-\frac{\br\cdot\tilde{\br}_j}{|\tilde{\br}_j|^3}
\right),
\]
with
\[
\tilde{\br}_j =\tilde{\br}_j(\l_j,g_j,z_j), \hskip 0.6cm
\tilde{d}_j = |\tilde{\br}_j - \br|.
\] 
Here we are assuming that $\br_j$ evolves according to Laplace's
solution for the planetary motions, and we write it as a function of
its frequencies, denoted by $\tilde{\br}_j$.
Hereafter we shall omit the 'tilde', to simplify the notation.\\
The frequencies $\g_j$ and $\s_j$ are of order $\epsilon$
\cite{morby}.  In order to study the secular dynamics, we would like
to eliminate all the frequencies corresponding to the fast angles
$\bl$. In case of a mean motion resonance with a planet this is not
possible.

In the following we shall assume that there is only one mean motion
resonance with a planet and no close approaches occur. To expose our
result we shall consider a $|h_5^*|:|h^*|$ mean motion resonance with
Jupiter given by
\begin{equation}
h^* \n + h^*_5 \n_5 =0 \quad\mbox{for some }(h^*,h^*_5)\in\mathbb{Z}^2.
\label{Jup_res}
\end{equation}
A mean motion resonance with another planet can be treated in a similar way.
We denote by
\[
\bphi=(\bl,\bg,\bz),\quad \bphi_j=(\bl_j,\bg_j,\bz_j)
\]
the vectors of the angles and by
\[
\bI=(\bL,\bG,\bZ),\quad \bI_j=(\bL_j,\bG_j,\bZ_j)
\]
the corresponding vectors of the actions.  \\
We use the Lie method \cite{morby} to search for a suitable canonical
transformation close to the identity, that is we search for a function
$\chi=\chi(\bI',\bphi')$ such that the inverse transformation is
\begin{equation*}
\Phi^\eps_\chi(\bI',\bphi') = (\bI,\bphi),
\end{equation*}
where $\Phi^t_\chi$ is the Hamiltonian flow associated to $\chi$.
The function $\chi$ is selected so that the transformed Hamiltonian 
$\H^\prime = \H\circ\Phi^\eps_\chi$ depends, at least at first order, on as less fast angular variables as
possible.
Using a formal expansion in $\eps$ we have
\begin{equation*}
\H' = \H\circ\Phi^\eps_\chi = \H + \eps\{\H,\chi\} + O(\eps^2) = 
\H_0 + \eps(\H_1 + \{\H_0,\chi\}) + O(\eps^2).
\end{equation*}
In the resonant case we search for a solution $\chi$ of the equation
\begin{equation}
\H_1 + \{\H_0,\chi\} = f
\label{homologic}
\end{equation}
for some function $f= f(\bI',h^*\l'+h_5^*\l_5',\bg',\bz')$.
To solve (\ref{homologic}) we restrict to the case where no orbit
crossings with the planets occur. We shall see in the next sections how we
can deal with the case of crossings.

We develop 
\[
\H_1  = \sum_{j=1}^{N-2} \H_1^{(j)} 
\]
in Fourier's series of the fast angles:
\begin{equation*}
\H_1^{(j)}= \sum_{(h,h_{j})\in\Z^2}\widehat{\H}_{(h,h_{j})}^{(j)} e^{i(h\l+h_{j}\l_j)}.
\end{equation*}
Here 
\begin{equation}
\widehat{\H}_{(h,h_{j})}^{(j)} = \widehat{\H}_{(h,h_{j})}^{(j)}(L,G,Z,\bg_j,\bz_j ) = \frac{1}{(2\pi)^2}\int_{\T^2}\H_1^{(j)}
e^{ -i(h\l+h_{j}\l_j)}d\l d\l_j
\label{fouriercoe}
\end{equation}
are the Fourier coefficients.  We observe that $\widehat{\H}_{(h,h_{j})}^{(j)}$ are
defined also in case of orbit crossings, since the integral in
(\ref{fouriercoe}) converges (see e.g. \cite{gt13}).
\\
Moreover, we can write $\chi$ as
\begin{equation*}
\chi = \sum_{j=1}^{N-2} \chi^{(j)}, \hskip 1cm
\chi^{(j)}=\chi^{(j)}(\L^\prime, G^\prime, Z^\prime,\bl_j^\prime, \bg_j^\prime, \bz_j^\prime)
\end{equation*}
and search for the coefficients
\[
\widehat{\chi}_{(h,h_{j})}^{(j)} = \widehat{\chi}_{(h,h_{j})}^{(j)}(L',G',Z',\bg_j',\bz_j' )
\]
in the Fourier series development 
\[
\chi^{(j)}= \sum_{(h,h_{j})\in\Z^2}\widehat{\chi}_{(h,h_{j})}^{(j)} e^{i(h\l^\prime+h_{j}\l_j^\prime )}.
\] 
Inserting these Fourier developments into (\ref{homologic})
we obtain
\[
\H_1 + \{\H_0,\chi\} = \sum_{j=1}^{N-2} 
\Bigl(\H_1^{(j)} - \frac{\partial\H_0}{\partial \bI}\cdot\frac{\partial\chi^{(j)}}{\partial\bphi}\Bigr),
\]
where
\begin{eqnarray*}
&& \H_1^{(j)} - \frac{\partial\H_0}{\partial \bI}\cdot\frac{\partial\chi^{(j)}}{\partial\bphi} = 
\sum_{(h,h_{j})\in\Z^2} \bigl[ \widehat{\H}_{(h,h_{j})}^{(j)} - i (h\n + h_{j}\n_j)\widehat{\chi}_{(h,h_{j})}^{(j)}\bigr]
  e^{i(h\l^\prime+h_{j}\l_j^\prime)}.
\end{eqnarray*}
This expression suggests to choose the function $f$ in
(\ref{homologic}) in the following form:
\[
f=\sum_{j=1}^{N-2} f_j,
\] 
where $f_5=f_5(\bI'_5,h^*\l'+h_5^*\l_5',\bg'_5,\bz'_5)$ and
$f_j=f_j(\bI'_j,\bg'_j,\bz'_j)$ for $j\neq 5$.
This can be accomplished by choosing
\begin{equation*}
\widehat{\chi}_{(h,h_{j})}^{(j)} =
\frac{\widehat{\H}_{(h,h_{j})}^{(j)}}{i (h\n + h_{j}\n_j)} 
\end{equation*}
when the denominator does not vanish. Hence, we exclude
the case $(h,h_{j})=(0,0)$ and the resonant case
$(h,h_5)=n(h^*,h_5^*)$ for some $n\in\mathbb{Z}^*=\mathbb{Z}\setminus\{0\}$, 
for which we assume that the corresponding Fourier coefficient of $\chi$
vanishes.
With this choice we have
\begin{eqnarray*}
f_5 &=& \widehat{\H}_{(0,0)}^{(5)} + \sum_{n\in\Z^*}\widehat{\H}_{n(h^*,h_5^*)}^{(5)}e^{in(h^*\l^\prime+ h_5^*\l_5^\prime)},\\
f_j &=& \widehat{\H}_{(0,0)}^{(j)} \qquad \mbox{for }j\neq 5 .
\end{eqnarray*}
We truncate the Fourier series to some order $n_{\rm max}$ and consider
\begin{equation}
\mathscr{H}_{n_{\rm max}} = 
\H_0 + \eps (\overline{\H}_1+\H_{res}^{n_{\rm max}})
\label{normalform}
\end{equation}
as resonant normal form  of the Hamiltonian, where   
\[
\overline{\H}_1 = \sum_{j=0}^{N-2} \widehat{\H}_{(0,0)}^{(j)},
\]
and
\[
\H_{res}^{n_{\rm max}}=  \sum_{1\leq |n|\leq{n_{\rm max}}}\widehat{\H}_{n(h^*,h_5^*)}^{(5)}e^{in(h^*\l^\prime+
  h_5^*\l_5^\prime)} = 2\Re \left(\sum_{n=1}^{n_{\rm max}}\widehat{\H}_{n(h^*,h_5^*)}^{(5)}e^{in(h^*\l^\prime+
  h_5^*\l_5^\prime)} \right),
\]
with $\Re(z)$ the real part of $z\in\mathbb{C}$, where we used
$\overline {\widehat{\H}_{(h,h_5)}^{(5)}} = \widehat{\H}_{(-h,-h_5)}^{(5)}$.
For simplicity, we shall write $\mathscr{H}$, $\H_{res}$ in place of
$\mathscr{H}_{n_{\rm max}}$, $\H_{res}^{n_{\rm max}}$.
It is easy to see that, for every $j$,
\begin{eqnarray*}
\widehat{\H}_{(0,0)}^{(j)} &=& \frac{1}{(2\pi)^2}\int_{\T^2}\H_1^{(j)}d\l d\l_j
= -\frac{\k^2\mu_j}{(2\pi)^2\mu_5}\int_{\T^2}\left(
\frac{1}{d_j}-\frac{\br\cdot\br_j}{|\br_j|^3} \right)d\l d\l_j = \\
&=&-\frac{\k^2\mu_j}{(2\pi)^2\mu_5}\int_{\T^2} \frac{1}{d_j} d\l d\l_j,
\end{eqnarray*}
being null the average of the indirect perturbation (see \cite{g02}).
We observe that in the Fourier coefficient $\widehat{\H}_{n(h^*,h_5^*)}^{(5)}$ the
term corresponding to the indirect perturbation does not vanish.
We can write 
\begin{eqnarray*}
\overline{\H}_1 &=& \sum_{j=0}^{N-2} \frac{C_j}{(2\pi)^2}\int_{\T^2} \frac{1}{d_j}
d\l d\l_j, \\ 
\H_{res}&=& \frac{2C_5}{(2\pi)^2}\sum_{n=1}^{n_{\rm max}}\left[
  I_5^{c,n}\cos n(h^*\l+h^*_5\l_5) + I_5^{s,n}\sin n(h^*\l+h^*_5\l_5) \right],
\end{eqnarray*}
where
\begin{eqnarray*}
C_j &=& -\frac{\k^2\mu_j}{\mu_5} = -\frac{\k^2m_j}{m_5},\\
I_5^{c,n} &=& \int_{\T^2}
\left(\frac{1}{d_5}-\frac{\br\cdot\br_5}{|\br_5|^3} \right)
\cos n(h^*\l+h^*_5\l_5)d\l d\l_5,  \\
I_5^{s,n} &=& \int_{\T^2}
\left(\frac{1}{d_5}-\frac{\br\cdot\br_5}{|\br_5|^3} \right)
\sin n(h^*\l+h^*_5\l_5)d\l d\l_5,
\end{eqnarray*}
with $I_5^{c,n}, I_5^{s,n}$ depend on $L,G,Z,\bg_5,\bz_5$.

Moreover, since the new Hamiltonian does not depend on $\l_j$ for
 $j\neq 5$ we have
\[
\H_0(\L,\L_5,G_1,\dots,G_N,Z_1,\dots,Z_N) = -\frac{\k^4}{2\L^2} +
  \n_5\L_5+ \sum_{j=1}^{N-2}(\g_jG_j+ \s_jZ_j) .
\]
%
 
We now introduce the resonant angle $\sigma$ through the canonical
transformation
\begin{equation*}
\left(
\begin{array}{l}
\sigma \\ 
\sigma_5
\end{array}
\right)
=
A
\left(
\begin{array}{l}
\ell \\ 
\ell_5
\end{array}
\right),
\quad
\left(
\begin{array}{l}
S \\ 
S_5
\end{array}
\right)
=
A^{-T}
\left(
\begin{array}{l}
L \\ 
L_5
\end{array}
\right),
\end{equation*}
with
\begin{equation*}
A=
\left(
\begin{array}{cc}
h^*  &   h_5^* \\ 
0  &   1/h^*
\end{array}
\right),
\hskip 1cm
A^{-T}=
\left(
\begin{array}{cc}
  1/h^*  &0\\
  -h_5^* &h^*
\end{array}
\right).
\end{equation*}
We chose the matrix $A$ so that $L$ does not depend on $S_5$. For this
reason we could not use a unimodular matrix. However, this will not
affect our analysis.

We shall still denote by
\begin{equation}
\mathscr{H}= \H_0 + \eps (\overline{\H}_1+\H_{res}),
\label{hfin}
\end{equation}
the resonant normal form of the Hamiltonian in these new variables,
with
\begin{align*}
&\H_0(S,S_5,G_1,\dots,G_N,Z_1,\dots,Z_N) = -\frac{\k^4}{2(h^* S)^2} +
  \n_5 (h^*_5 S + S_5/h^*)+ \sum_{j=1}^{N-2}(\g_jG_j+ \s_jZ_j), \\
&\H_{res}(S,G,Z,\sigma,{\bg_5,\bz_5}) = \frac{2C_5}{(2\pi)^2} \sum_{n=1}^{n_{\rm max}}(I_5^{c,n} \cos n\sigma
+ I_5^{s,n}\sin n\sigma),
\\
& \overline{\H}_1(S,G,Z,{\bg,\bz})
=\sum_{j=1}^{N-2}\frac{C_j}{(2\pi)^2}\int_{\T^2}\frac{1}{d_j(\ell,\ell_j)}
d\ell d\ell_j.
\end{align*}
Since the Hamiltonian does not depend on $\sigma_5$, the value of
$S_5$ will remain constant and we will treat it as a parameter.
Calling $\Y=(S,G,Z,\sigma,g,z)$ we consider the equations for the motion of the asteroid given by
\begin{equation}\label{hamres}
\dot{\Y} = \mathbb{J}_3\nabla_\Y\mathscr{H},
\end{equation}
where
\begin{equation*}
\mathbb{J}_3=
\left(
\begin{array}{cc}
0 & -I \\
\phantom{-}I & 0
\end{array}
\right) 
\end{equation*}
is the symplectic identity of order $6$.
In components, system (\ref{hamres}) is written as
\begin{align*} 
\dot S &= -\frac{\partial\mathscr{H}}{\partial \sigma} =
-\eps \frac{\partial\H_{res}}{\partial\sigma}, \\
%
\dot \sigma &= \frac{\partial\mathscr{H}}{\partial S}= \frac{h^*\k^4}{(h^* S)^3} +
\n_5 h^*_5 + \eps\left(\frac{\partial \H_{res}}{\partial S}
+\frac{\partial\overline{\H}_1}{\partial
  S}\right),\\
%
\dot G &= -\frac{\partial\mathscr{H}}{\partial g}= -\eps\left(\frac{\partial
  \H_{res}}{\partial
  g}+\frac{\partial\overline{\H}_1}{\partial
  g}\right),\\
%
\dot g &= \frac{\partial\mathscr{H}}{\partial G}= \eps\left(\frac{\partial
  \H_{res}}{\partial G}
+\frac{\partial\overline{\H}_1}{\partial
  G}\right),\\
%
\dot Z &= -\frac{\partial\mathscr{H}}{\partial z}= -\eps\left(\frac{\partial
  \H_{res}}{\partial
  z}+\frac{\partial\overline{\H}_1}{\partial
  z}\right),\\
%
\dot z &= \frac{\partial\mathscr{H}}{\partial Z}= \eps\left(\frac{\partial
  \H_{res}}{\partial Z}
+\frac{\partial\overline{\H}_1}{\partial
  Z}\right).
\end{align*}
where $\H_{res}$, $\overline{\H}_1$ are functions of
$(S,G,Z,\sigma,{\bg_5,\bz_5})$ and $(S,G,Z,{\bg,\bz})$
respectively.
Since $\eps C_j = -\k^2 \mu_j$, we get
\small
\begin{align*} 
\dot S &= \frac{\k^2}{(2\pi)^2} 2\mu_5 \sum_{n=1}^{n_{\rm max}}n\left( I_5^{s,n} \cos  n\sigma  - I_5^{c,n} \sin n\sigma \right),  
\\
\dot \sigma &= \frac{h^*\k^4}{(h^* S)^3} + \n_5 h^*_5  
\\
&- \frac{\k^2}{(2\pi)^2} \biggl\{
2\mu_5 \sum_{n=1}^{n_{\rm max}}\left(\frac{\partial I_5^{c,n}}{\partial S} \cos n\sigma + \frac{\partial I_5^{s,n}}{\partial S}\sin n\sigma\right)
+\sum_{j=1}^{N-2} \mu_j\frac{\partial}{\partial S}\int_{\T^2}\frac{1}{d_j} d\l
d\l_j\biggr\}, 
\\
\dot G &= \frac{\k^2}{(2\pi)^2} 
 \biggl\{
2\mu_5\sum_{n=1}^{n_{\rm max}}\left(\frac{\partial I_5^{c,n}}{\partial g} \cos n\sigma + \frac{\partial I_5^{s,n}}{\partial g}\sin n\sigma\right)
+\sum_{j=1}^{N-2}\mu_j\frac{\partial}{\partial g}\int_{\T^2}\frac{1}{d_j} d\l
d\l_j\biggr\}, 
\\
\dot g &= -\frac{\k^2}{(2\pi)^2}\biggl\{
2\mu_5 \sum_{n=1}^{n_{\rm max}}\left(\frac{\partial I_5^{c,n}}{\partial G} \cos n\sigma + \frac{\partial I_5^{s,n}}{\partial G}\sin n\sigma\right)
+\sum_{j=1}^{N-2}\mu_j\frac{\partial}{\partial G}\int_{\T^2}\frac{1}{d_j} d\l
d\l_j\biggr\}, 
\\
\dot Z &=\frac{ \k^2}{(2\pi)^2}   \biggl\{
2\mu_5\sum_{n=1}^{n_{\rm max}}\left(\frac{\partial I_5^{c,n}}{\partial z} \cos n\sigma + \frac{\partial I_5^{s,n}}{\partial z}\sin n\sigma\right)
+\sum_{j=1}^{N-2}\mu_j\frac{\partial}{\partial z}\int_{\T^2}\frac{1}{d_j} d\l
d\l_j\biggr\}, 
\\
\dot z &= -\frac{\k^2}{(2\pi)^2} \biggl\{ 
2\mu_5 \sum_{n=1}^{n_{\rm max}}\left(\frac{\partial I_5^{c,n}}{\partial Z} \cos n\sigma + \frac{\partial I_5^{s,n}}{\partial Z}\sin n\sigma\right)
+\sum_{j=1}^{N-2} \mu_j\frac{\partial}{\partial Z}\int_{\T^2}\frac{1}{d_j} d\l
d\l_j\biggr\}.
\end{align*}
\normalsize The derivatives of $\H_{res}$ and $\overline{\H}_1$ are
not defined at orbit crossings with the planets. In the following
sections we shall discuss how we can define generalized solutions of
system (\ref{hamres}) in case of orbit crossings.

\section{The orbit distance}
\label{sec:distance}
We recall here some facts and notations from \cite{gt07},
\cite{gt13}. Let $(E,v)$, $(E',v')$ be two sets of orbital elements,
where $E,E'$ describe the trajectories of the asteroid and one planet,
$v,v'$ describe the position of these bodies along them.  Denote by
$\mu'$ the ratio of the mass of this planet to the mass of the Sun.
We also introduce the notation $\mathcal{E}=(E,E')$ for the two-orbit
configuration and $V=(v,v')$ for the vector of parameters along the
orbits.
We denote by $\X=\X(E,v)$ and $\X'=\X'(E',v')$ the Cartesian
coordinates of the asteroid and the planet respectively.  For each
given $\E$, $V_h(\E)$ represents a local minimum point of the
function
\[
V \mapsto d^2(\E,V) = |\X(E,v)-\X'(E',v')|^2.
\]
We introduce
the local minimum maps
\[
\E\mapsto d_h(\E)=d(\E,V_h),
\]
and the orbit distance
\[
\E\mapsto d_{min}(\E)=\min_h d_h(\E).
\] 
We shall consider non-degenerate configurations $\E$, i.e such that
all the critical points of the map $V \mapsto d(\E,V)$ are non-degenerate. In this way, we can always choose a neighborhood $\W$ of
$\E$ where the maps $d_h$ do not have bifurcations.  A crossing
configuration is a two-orbit configuration $\E_c$ such that
$d(\E_c,V_h(\E_c))=0$ where $V_h(\E_c)$ is the corresponding minimum
point.
The maps $d_h$ and $d_{min}$ are singular at crossing configurations,
and their derivatives in general do not exists. Anyway, it is possible
to obtain analytic maps in a neighborhood of a crossing configuration
$\E_c$ by a suitable choice of the sign for these maps. We summarize
here the procedure to deal with this singularity for $d_h$; the
procedure for $d_{min}$ is the same.
Let $V_h=(v_h,v_h')$ be a local minimum point of
$d^2$ and let $\X_h=\X_h(E,v_h)$ and $\X_h'=\X_h'(E',v_h')$.  We
introduce the vectors tangent to the trajectories defined by $E,E'$ at these
points
\[
\tau_h=\frac{\partial\X}{\partial v}(E,v_h),\quad
\tau'_h=\frac{\partial\X'}{\partial v'}(E',v'_h)
\]
and their cross product $\tau_h^*=\tau_h'\times\tau_h$. Both vectors
$\tau_h,\tau'_h$ are orthogonal to $\Delta_h=\X'_h-\X_h$, so
that $\tau_h^*$ is parallel to $\Delta_h$, see Figure
\ref{regrule}.

\begin{figure}[h!]
\psfragscanon
\psfrag{t1}{$\tau_h'$}\psfrag{t2}{$\tau_h$}\psfrag{t3}{$\tau_h^*$}
\psfrag{Dmin}{$\Delta_h$}
\psfrag{Planet orbit}{\small planet orbit}
\psfrag{Asteroid orbit}{\small asteroid orbit}
\centerline{\epsfig{figure=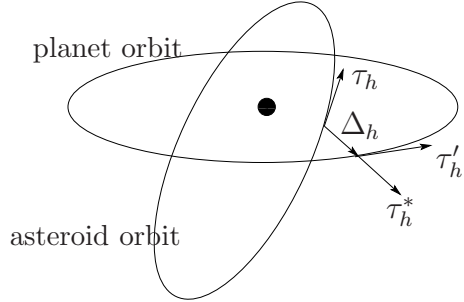,width=6cm}}
\psfragscanoff
\caption{The vectors $\tau_h^*,\Delta_h$.}
\label{regrule}
\end{figure}
\noindent Denoting by $\hat\tau_h^*$, $\hat\Delta_h$ the
corresponding unit vectors, we consider the local minimal distance with sign 
\begin{equation}
\tilde{d}_h=(\hat\tau_h^* \cdot\hat\Delta_h)d_h.
\label{dhtilde}
\end{equation}
This map is analytic in a neighborhood of most crossing
configurations. Actually, this smoothing procedure fails in case the
vectors $\tau_h,\tau'_h$ are parallel.\\
Finally, given a neighborhood $\W$ of $\E_c$ without bifurcations of $d_h$, we write
$\W=\W^-\cup\Sigma\cup\W^+$, where
\[
\Sigma = \W\cap\{\tilde{d}_h(\E)=0\},\quad \W^+=\W\cap\{\tilde{d}_h(\E)>0\},\quad
\W^-=\W\cap\{\tilde{d}_h(\E)<0\}.
\]

\section{Extraction of the singularities}
\label{sec:singularity}

In the following we shall expose a method to investigate the crossing
singularities occurring in \eqref{hamres}. For simplicity, we shall
eventually drop the index 5, referring to Jupiter, and denote simply
by a prime the quantities referring to the crossed planet.

Let $\E_c$ be a two-orbit crossing configuration and suppose that the
trajectories are described by the vector $E=(S,G,Z,g,z)$. In the
following we shall write $y_i$ for the components of the vector
$E$. We choose the mean anomalies as parameters along the trajectory
so that $V=(\ell,\ell')$. The first step of our analysis is to
consider, for each $\E$ in a neighborhood $\W$ of $\E_c$, the Taylor
expansion of $V\mapsto d(\E,V)$ in a neighborhood of $V_h=V_h(\E)$,
i.e.
\[
d^2(\E,V)=d_h^2(\E)+(V-V_h)\cdot\A_h (V-V_h)+{\cal R}_3^{(h)}(\E,V),
\]
where ${\cal R}_3^{(h)}$ is the remainder in the integral form,
and define the approximated distance
\begin{equation}\label{approxdist}
\delta_h(\E,V)=\sqrt{d^2_h(\E) + (V-V_h)\cdot\A_h (V-V_h)},
\end{equation}
with
\[
\A_h=
\left[
\begin{array}{cc}
|\tau_h|^2+\frac{\partial^2\X}{\partial v^2}(E,v_h)\cdot\Delta_h & -\tau_h\cdot\tau_h' \\
-\tau_h\cdot\tau_h' &  |\tau'_h|^2+\frac{\partial^2\X'}{\partial v'^2}(E',v'_h)\cdot\Delta_h .
\end{array}
\right]
\]
The matrix ${\cal A}_h$ is positive definite except for tangent
crossings, where it is degenerate.
To study the crossing singularities in case of a mean motion resonance
with Jupiter we distinguish between the case where the asteroid
trajectory crosses the trajectory of another planet and the case where
it crosses the trajectory of Jupiter itself.
In the first case the crossing singularity appears only in the
averaged terms $\frac{\partial\overline{\H}_1}{\partial y_i}$.
In the second case also the derivatives $\frac{\partial
  I_5^{s,n}}{\partial y_i}$, $\frac{\partial I_5^{c,n}}{\partial y_i}$
are affected by this singularity.
In both cases the component $\frac{\partial \mathscr{H}}{\partial
  \sigma}$ is regular.

We obtain the following results.


\begin{theorem}\label{gttheorem}
  Let $\mathcal{E}_c$ be a non-degenerate crossing configuration
  with a planet (including Jupiter).  Then, there exists a
  neighborhood $\mathcal{W}$ of $\mathcal{E}_c$ such that
  for each $i=1,\ldots, 5$ we can define two maps
$$
{\cal W} \ni \E\mapsto \eps\left(\frac{\partial\overline{\H}_1}{\partial y_i}\right)_h^\pm(\E)
$$
that are  Lipschitz-continuous extensions of the maps
 
\[
  {\cal W}^\pm \ni \E\mapsto
\frac{\mu'\k^2}{(2\pi)^2}\frac{\partial}{\partial
  y_i}\int_{\T^2}\frac{1}{d(\E,V)} dV.
\]
Moreover, the following relation holds in $\W$:
\[
%
\hskip 1.5cm \eps\left(\frac{\partial\overline{\H}_1}{\partial y_i}\right)_h^- -
\eps\left(\frac{\partial\overline{\H}_1}{\partial y_i}\right)_h^+ = -\frac{\mu'
  \k^2}{\pi}\left[\frac{\partial}{\partial
    y_i}\left(\frac{1}{\sqrt{\det(\mathcal{A}_h)}}\right)\tilde{d}_h +
  \frac{1}{\sqrt{\det(\mathcal{A}_h)}}\frac{\partial \tilde{d}_h}{\partial
    y_i} \right].
\]
\end{theorem}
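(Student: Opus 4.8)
The plan is to reduce the statement to a single, well-understood model integral — the "main singular term" — whose behavior near a crossing is explicit, and then control the error terms uniformly.

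First I would localize: on the neighborhood $\W$ (chosen without bifurcations of $d_h$, and small enough that $\A_h$ stays positive definite away from tangent crossings, which we exclude by non-degeneracy), split the integral
\[
\int_{\T^2}\frac{1}{d(\E,V)}\,dV = \int_{U_h}\frac{1}{d(\E,V)}\,dV + \int_{\T^2\setminus U_h}\frac{1}{d(\E,V)}\,dV,
\]
where $U_h$ is a fixed small ball around $V_h(\E)$. On $\T^2\setminus U_h$ the integrand is bounded and smooth in $\E$ (the distance is bounded below there), so this piece is $C^1$ across $\Sigma$ and contributes nothing to the jump. On $U_h$, I would replace $d$ by the approximated distance $\delta_h$ from \eqref{approxdist}: write $\frac1d = \frac1{\delta_h} + \bigl(\frac1d - \frac1{\delta_h}\bigr)$. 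Using $d^2 = \delta_h^2 + \R_3^{(h)}$ with $\R_3^{(h)} = O(|V-V_h|^3)$, one shows $\frac1d - \frac1{\delta_h}$ is $C^1$ in $\E$ up to the crossing (the cubic remainder kills the singularity strongly enough that one derivative survives as an absolutely convergent integral), so this piece too extends $C^1$ across $\Sigma$ and contributes nothing to the jump. This is the standard argument from \cite{gt13}; I would invoke it rather than redo it.

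So everything reduces to the model term $\int_{U_h}\frac{dV}{\delta_h(\E,V)}$. Here I would diagonalize: change variables $W = \sqrt{\A_h}\,(V-V_h)$ (legitimate since $\A_h\succ0$), turning the integral into $\frac{1}{\sqrt{\det\A_h}}\int \frac{dW}{\sqrt{d_h^2 + |W|^2}}$ over an $\E$-dependent region. Switching to polar coordinates in $W$ and integrating out the radius, the radial integral produces a term $-|d_h|\cdot 2\pi$ (up to the fixed boundary contribution, which is smooth), i.e. the leading singular behavior of $\int_{U_h}\delta_h^{-1}$ near $d_h=0$ is $\mathrm{const} - \frac{2\pi}{\sqrt{\det\A_h}}|d_h| + (\text{smooth})$. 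Now I use the signed distance: on $\W^+$ we have $\tilde d_h = d_h$, on $\W^-$ we have $\tilde d_h = -d_h$, so $|d_h| = \tilde d_h$ on $\W^+$ and $|d_h| = -\tilde d_h$ on $\W^-$. Replacing $|d_h|$ by $\tilde d_h$ in the formula gives a single expression, smooth in $\E$ (since $\tilde d_h$ and $\det\A_h$ are smooth near the non-tangent crossing), which is the common Lipschitz extension; call its $y_i$-derivative the $+$ branch. The $-$ branch differs by the derivative of $-2\cdot\frac{2\pi}{\sqrt{\det\A_h}}\tilde d_h$ times the appropriate constant. Tracking the prefactor $\frac{\mu'\k^2}{(2\pi)^2}$ through, the jump of $\eps(\partial\overline{\H}_1/\partial y_i)_h$ is
\[
-\frac{\mu'\k^2}{\pi}\,\frac{\partial}{\partial y_i}\!\left(\frac{\tilde d_h}{\sqrt{\det\A_h}}\right)
= -\frac{\mu'\k^2}{\pi}\left[\frac{\partial}{\partial y_i}\!\left(\frac1{\sqrt{\det\A_h}}\right)\tilde d_h + \frac1{\sqrt{\det\A_h}}\frac{\partial\tilde d_h}{\partial y_i}\right],
\]
which is exactly the claimed relation; the product rule is the last line.

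The Lipschitz (not merely continuous) claim for each one-sided extension follows because the common extension $\mathrm{const} - \frac{2\pi}{\sqrt{\det\A_h}}\tilde d_h + (\text{smooth})$ is actually $C^1$ — wait, its $y_i$-derivative then would be continuous across $\Sigma$, which contradicts a nonzero jump — so more carefully: the extension of $\int_{\T^2}d^{-1}dV$ itself from $\W^+$ is $C^1$ because on $\W^+$, $|d_h|=\tilde d_h$ is smooth; likewise from $\W^-$. Each one-sided extension of the \emph{derivative} is then continuous, hence locally Lipschitz on $\W$ (being a restriction-and-smooth-extension of a $C^1$ function); they just fail to agree on $\Sigma$, by the amount computed. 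I expect the main obstacle to be the uniformity in $\E$ of the remainder estimates — showing that differentiating $\int_{U_h}(\frac1d-\frac1{\delta_h})dV$ under the integral sign is justified with bounds uniform as $d_h\to0$ — together with keeping careful track of which terms are genuinely smooth (the truncated-region boundary terms, the $\T^2\setminus U_h$ piece) versus which carry the singularity. The $\frac{\partial^2\X}{\partial v^2}\cdot\Delta_h$ entries in $\A_h$ vanish at the crossing but are smooth, so they do not affect the leading analysis; one only needs $\A_h$ invertible on all of $\W$, guaranteed by non-degeneracy and shrinking $\W$.
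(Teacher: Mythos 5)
Your overall route is the same as the paper's: the paper proves this theorem simply by following the steps of \cite[Theorem 4.2]{gt13} with $R$ replaced by $-\eps\overline{\H}_1$, i.e.\ precisely the singularity extraction you sketch (localize near $V_h$, split $1/d=1/\delta_h+(1/d-1/\delta_h)$, use $\int_{\D}\delta_h^{-1}dV=\frac{2\pi}{\sqrt{\det\A_h}}(\sqrt{d_h^2+r^2}-d_h)$, and read off the jump by writing $d_h=\pm\tilde d_h$ on $\W^\pm$). Your identification of the smooth pieces and the jump computation are correct in outline; note only that the prefactor multiplying $\int d^{-1}dV$ inside $\eps\overline{\H}_1$ is $\eps C_5/(2\pi)^2=-\mu'\k^2/(2\pi)^2$, and it is this negative sign (hidden in your ``times the appropriate constant'') that produces the stated $-\mu'\k^2/\pi$ in front of the bracket.

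The genuine gap is in your justification of the Lipschitz property. ``Each one-sided extension of the derivative is then continuous, hence locally Lipschitz'' is a non sequitur: continuity on $\W$ gives no Lipschitz bound. The stronger claim you lean on --- that $\partial_{y_i}\int_{\D}\bigl(1/d-1/\delta_h\bigr)dV$ is $C^1$ across $\Sigma$ because ``the cubic remainder kills the singularity'' --- is not supported by the available estimates: the second derivatives of $1/d-1/\delta_h$ are only controlled by $C/(d_h^2+|V-V_h|^2)$ (cf.\ \eqref{daprov}), whose integral over $\D$ grows like $\log(1/d_h)$, so differentiating once more under the integral sign with uniform bounds is exactly the step that is not free; and it is not what \cite{gt13} proves (there one gets a continuous extension of this piece only). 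What actually yields Lipschitz continuity is a uniform bound on $\W\setminus\Sigma$ for the full second derivatives, i.e.\ the analogue for $n=0$ of the boundedness of $F(\E)$ in the proof of Theorem~\ref{teosalto} (estimate \eqref{aiv}), combined with the smoothness of the extracted term once $d_h$ is replaced by $\tilde d_h$: a uniformly bounded $\E$-gradient on each of $\W^\pm$ gives a Lipschitz constant that survives the continuous extension up to $\Sigma$. So if you invoke \cite{gt13}, it must be invoked for these uniform second-derivative estimates, not for a $C^1$ extension of the remainder piece, which is generically false.
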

%
\begin{proof}
  We can show this result by following the same steps as in
  \cite[Theorem 4.2]{gt13}, replacing $R$ by $-\eps\overline{\H}_1$.
  
  \end{proof}

\begin{theorem}\label{teosalto}
  Let ${\sf h}=(h^*,h^*_5)$ and $\mathcal{E}_c$ be a non-degenerate crossing
  configuration with Jupiter.
 Then, there exists a
neighborhood $\mathcal{W}$ of $\mathcal{E}_c$ such that, for every $n>0$ 
and for each $i=1,\ldots, 5$, we can define four maps
\[
\mathcal{W}\ni\E\mapsto\left(\frac{\partial I_5^{c,n}}{\partial y_i}\right)_h^\pm(\E),
\qquad
\mathcal{W}\ni\E\mapsto \left(\frac{\partial I_5^{s,n}}{\partial y_i}\right)_h^\pm(\E),
\]
that are   Lipschitz-continuous extensions of the maps

\begin{align} 
  {\cal W}^\pm \ni \E\mapsto &
  \frac{\partial }{\partial y_i}\int_{\T^2}\left(\frac{1}{d(\E,V)}-\frac{\br\cdot\br_5}{|\br_5|^3} \right)\cos(n{\sf h}\cdot V)dV, 
\label{dercos}\\
  {\cal W}^\pm \ni \E\mapsto &
  \frac{\partial }{\partial y_i}\int_{\T^2}\left(\frac{1}{d(\E,V)}-\frac{\br\cdot\br_5}{|\br_5|^3} \right)\sin(n{\sf h}\cdot V)dV,
\label{dersin}
\end{align}
respectively.
Moreover, 
%
%
the following
relations hold in $\W$:
\begin{align*}
  \left(\frac{\partial I_5^{c,n}}{\partial y_i}\right)_h^- -
  \left(\frac{\partial I_5^{c,n}}{\partial y_i}\right)_h^+ &=
  4\pi\cos(n{\sf h}\cdot V_h)\left[\frac{\partial}{\partial y_i}\left(\frac{1}{\sqrt{\det(\mathcal{A}_h)}}\right)\tilde{d}_h+\frac{1}{\sqrt{\det(\mathcal{A}_h)}}\frac{\partial \tilde{d}_h}{\partial y_i}      \right], \\
  \left(\frac{\partial I_5^{s,n}}{\partial y_i}\right)_h^-
  -\left(\frac{\partial I_5^{s,n}}{\partial y_i}\right)_h^+ &=
  4\pi\sin(n{\sf h}\cdot V_h)\left[\frac{\partial}{\partial y_i}\left(\frac{1}{\sqrt{\det(\mathcal{A}_h)}}\right)\tilde{d}_h+\frac{1}{\sqrt{\det(\mathcal{A}_h)}}\frac{\partial \tilde{d}_h}{\partial y_i}       \right].
\end{align*}
\end{theorem}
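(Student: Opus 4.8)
The plan is to reduce Theorem \ref{teosalto} to the technique already established for the averaged term, i.e. to Theorem \ref{gttheorem} (equivalently to \cite[Theorem 4.2]{gt13}), by observing that the only source of non-smoothness in the integrals \eqref{dercos}--\eqref{dersin} is the Newtonian kernel $1/d(\E,V)$, since the indirect part $\br\cdot\br_5/|\br_5|^3$ is a smooth function of $\E$ and $V$ and its integral against $\cos(n{\sf h}\cdot V)$ or $\sin(n{\sf h}\cdot V)$ contributes a $C^\infty$ term. So it suffices to analyse
\[
{\cal W}^\pm\ni\E\mapsto \frac{\partial}{\partial y_i}\int_{\T^2}\frac{\cos(n{\sf h}\cdot V)}{d(\E,V)}\,dV
\quad\text{and the analogous }\sin\text{ integral.}
\]
As in \cite{gt13}, I would split $\T^2$ into a small neighbourhood $U_h$ of the minimum point $V_h(\E)$ (the only place where differentiation under the integral sign fails as $d_h\to 0$) and its complement, on which the integrand and all its $y_i$-derivatives are bounded uniformly in $\W$, hence give a $C^1$, in fact Lipschitz, contribution with no jump. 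The whole singular behaviour is therefore localized in $\int_{U_h}\cos(n{\sf h}\cdot V)/d(\E,V)\,dV$.

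On $U_h$ I would replace $d$ by the approximated distance $\delta_h(\E,V)=\sqrt{d_h^2(\E)+(V-V_h)\cdot\A_h(V-V_h)}$ from \eqref{approxdist}; the difference $1/d-1/\delta_h$ involves the cubic remainder ${\cal R}_3^{(h)}$ and, exactly as in \cite{gt13}, is regular enough that its $y_i$-derivative extends continuously (indeed Lipschitz) across $\Sigma$ with no jump. For the leading piece I would further write $\cos(n{\sf h}\cdot V)=\cos(n{\sf h}\cdot V_h)+[\cos(n{\sf h}\cdot V)-\cos(n{\sf h}\cdot V_h)]$: the bracketed factor vanishes to first order at $V=V_h$, so multiplied by $1/\delta_h$ it again yields a term whose $y_i$-derivative is continuous across the crossing set. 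What remains is
\[
\cos(n{\sf h}\cdot V_h)\,\frac{\partial}{\partial y_i}\int_{U_h}\frac{dV}{\delta_h(\E,V)},
\]
and the inner integral is precisely the model integral already treated in \cite{gt13}: after the linear change of variables diagonalizing $\A_h$ one gets, up to smooth terms, a quantity of the form $\frac{1}{\sqrt{\det\A_h}}\big(\text{const}-\text{const}\cdot|\tilde d_h|\big)$, whose $y_i$-derivative has a jump across $\Sigma=\{\tilde d_h=0\}$ equal to
\[
-\frac{2}{\pi}\cdot 2\pi\Big[\tfrac{\partial}{\partial y_i}\big(\tfrac{1}{\sqrt{\det\A_h}}\big)\tilde d_h+\tfrac{1}{\sqrt{\det\A_h}}\tfrac{\partial\tilde d_h}{\partial y_i}\Big]
\]
(the $|\tilde d_h|$ contributes the $\partial\tilde d_h/\partial y_i$ term via $\partial_{y_i}|\tilde d_h|=\pm\partial_{y_i}\tilde d_h$, and the prefactor contributes the other term). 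Carrying the constants through \eqref{dercos} gives the factor $4\pi\cos(n{\sf h}\cdot V_h)$; the $\sin$ case is identical with $\cos(n{\sf h}\cdot V_h)$ replaced by $\sin(n{\sf h}\cdot V_h)$. Finally, using $V_h=V_h(\E)$ is itself analytic in $\W$ (non-degeneracy) one may replace $\cos(n{\sf h}\cdot V_h(\E))$ by its value at $\E_c$ up to yet another smooth term, so the stated jump formulas hold in all of $\W$.

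The one point requiring care — and the main obstacle — is the presence of the oscillatory factor $\cos(n{\sf h}\cdot V)$, which is absent in \cite{gt13}: I must check that multiplying the Newtonian kernel by this bounded analytic weight does not destroy the uniform (in $\E\in\W$) bounds used to differentiate under the integral away from $V_h$, and, more delicately, that in the local analysis near $V_h$ the weight can be Taylor-expanded around $V_h$ with the first-order-vanishing remainder genuinely producing a jump-free contribution. Both follow from the fact that $\partial^\alpha_V\cos(n{\sf h}\cdot V)$ is bounded by $(n|{\sf h}|)^{|\alpha|}$ uniformly, but it is worth noting that the resulting Lipschitz constants and the size of the jump grow with $n$; since in the normal form $\H_{res}$ the series is truncated at $n_{\rm max}$, this is harmless, and it is exactly this $n$-dependence that one tracks in Section \ref{sec:dynpr} when letting $n_{\rm max}\to\infty$.
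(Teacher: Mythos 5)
Your decomposition is essentially the paper's: the indirect part is smooth, the integral is localized to a region $\D$ around $V_h$, the integrand is split as $\cos(n{\sf h}\cdot V)\bigl(1/d-1/\delta_h\bigr)+\bigl[\cos(n{\sf h}\cdot V)-\cos(n{\sf h}\cdot V_h)\bigr]/\delta_h+\cos(n{\sf h}\cdot V_h)/\delta_h$, and the whole discontinuity is carried by $\cos(n{\sf h}\cdot V_h)\,\partial_{y_i}\int_{\D}\delta_h^{-1}dV$, whose two analytic extensions (Lemma~\ref{lem3}, i.e.\ the model computation of \cite{gt13}) give the factor $4\pi\cos(n{\sf h}\cdot V_h)$ in the jump. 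Your intermediate constant ``$-\tfrac{2}{\pi}\cdot 2\pi$'' is garbled, but the final factor is right; also note that the product rule produces the extra addendum $\partial_{y_i}\bigl(\cos(n{\sf h}\cdot V_h)\bigr)\int_{\D}\delta_h^{-1}dV$, which is continuous and jump-free but should be accounted for. Up to these points, the continuity statements and the jump formulas are argued as in the paper.

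The genuine gap is the Lipschitz-continuity claim, which is the bulk of the paper's proof and which you dismiss by saying it follows from the uniform boundedness of $\partial_V^\alpha\cos(n{\sf h}\cdot V)$. That is not sufficient. Lipschitz continuity requires uniform bounds on the mixed second derivatives in the orbital elements, and the critical contribution is the first-order Taylor piece of the weight, $-n\sin(n{\sf h}\cdot V_h)\,{\sf h}\cdot(V-V_h)$, multiplied by $\partial^2_{y_iy_j}\delta_h^{-1}$ (this same object appears if you differentiate your term $\bigl[\cos(n{\sf h}\cdot V)-\cos(n{\sf h}\cdot V_h)\bigr]/\delta_h$ twice). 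Since $\bigl|\partial^2_{y_iy_j}\delta_h^{-1}\bigr|$ is only $O\bigl((d_h^2+|V-V_h|^2)^{-3/2}\bigr)$, the absolute-value estimate gives $\int_{\D}|V-V_h|\,(d_h^2+|V-V_h|^2)^{-3/2}dV\sim\log(1/d_h)$, which is \emph{not} uniformly bounded as the crossing is approached: first-order vanishing of the weight at $V_h$ does not by itself tame this term. The paper handles exactly this piece ($F_3$ in its proof) by the change of variables $\xi=\A_h^{1/2}(V-V_h)$ and polar coordinates, expanding $\partial_{y_i}\delta_h^2\,\partial_{y_j}\delta_h^2$ and $\partial^2_{y_iy_j}\delta_h^2$ in powers of $V-V_h$ and exploiting that all angular integrals of odd total degree vanish, the even-degree ones being uniformly bounded. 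Your sketch offers no substitute for this cancellation argument, so the Lipschitz character of the extensions (and hence of the extended vector field) is not actually established; the $n$-dependence of the constants that you do flag is a side issue by comparison.
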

Before giving a proof of Theorem~\ref{teosalto} we state some
consequences of both theorems.
We define the following locally Lipschitz-continuous maps, extending
the vector field of Hamilton's equations (\ref{hamres}) in a
neighborhood of the crossing singularity,
\[
  {\W\times \T\ni(\E,\sigma)\mapsto}
  \left(\frac{\partial \mathscr{H}}{\partial y_i}\right)_h^\pm{(\E,\sigma)} :=
  \left\{
  \begin{array}{ll}
&\frac{\partial\H_0}{\partial y_i}{(\E)} +
\eps\left(\frac{\partial\overline{\H}_1}{\partial y_i}\right)_h^\pm{(\E)} +
\eps \left(\frac{\partial\H_{res}}{\partial y_i}\right)_h^\pm{(\E,\sigma)},\cr
&\cr
&\frac{\partial\H_0}{\partial y_i}{(\E)} +
\eps\left(\frac{\partial\overline{\H}_1}{\partial y_i}\right)_h^\pm{(\E)},\cr
\end{array}
\right.
\]
where we use the definition above in case of crossings with Jupiter,
and the one below for crossings with other planets.
Here $\H_0$, $\H_{res}$ are defined as in (\ref{hfin}),
and
\[
\eps \left(\frac{\partial\H_{res}}{\partial y_i}\right)_h^\pm{(\E,\sigma)} =
-\frac{2\mu' {\rm k}^2}{(2\pi)^2} \sum_{n=1}^{n_{\rm max}}\Biggl(
\left(\frac{\partial I_5^{c,n}}{\partial y_i}\right)_h^\pm{(\E)} \cos (n\sigma) + \left(\frac{\partial I_5^{s,n}}{\partial
  y_i}\right)_h^\pm{(\E)} \sin(n\sigma) \Biggr).
\]
Moreover, we consider the map 
\[
\W\times \T \ni (\E,\sigma)\mapsto \mathrm{Diff}_h\left(\frac{\partial \mathscr{H}}{\partial y_i}\right)(\E,\sigma) := \left(\frac{\partial \mathscr{H}}{\partial y_i}\right)_h^-{(\E,\sigma)} -
  \left(\frac{\partial \mathscr{H}}{\partial y_i}\right)_h^+{(\E,\sigma)}.
\]

\begin{cor}
\label{cor1}
If $\mathcal{E}_c$ corresponds to a crossing configuration with a
planet different from Jupiter, then the following relation holds in
$\W$:
\begin{align*}
  \mathrm{Diff}_h\left(\frac{\partial \mathscr{H}}{\partial y_i}\right) & = 
\eps\left(\frac{\partial\overline{\H}_1}{\partial y_i}\right)_h^- -
\eps\left(\frac{\partial\overline{\H}_1}{\partial y_i}\right)_h^+ \\  
&= -\frac{\mu'
  \k^2}{\pi}\left[\frac{\partial}{\partial
    y_i}\left(\frac{1}{\sqrt{\det(\mathcal{A}_h)}}\right)\tilde{d}_h +
  \frac{1}{\sqrt{\det(\mathcal{A}_h)}}\frac{\partial \tilde{d}_h}{\partial
    y_i} \right].
  %
\end{align*} 
\end{cor}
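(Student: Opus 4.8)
The plan is to derive Corollary~\ref{cor1} directly from the two theorems by simply expanding the difference operator $\mathrm{Diff}_h$ along the decomposition of $\partial\mathscr H/\partial y_i$ into its three pieces. First I would recall that for a crossing configuration with a planet different from Jupiter, the definition of $\left(\partial\mathscr H/\partial y_i\right)_h^\pm$ is the lower branch of the displayed formula, namely
\[
\left(\frac{\partial \mathscr{H}}{\partial y_i}\right)_h^\pm = \frac{\partial\H_0}{\partial y_i} + \eps\left(\frac{\partial\overline{\H}_1}{\partial y_i}\right)_h^\pm,
\]
so that neither $\H_0$ nor $\H_{res}$ contributes a singular part: $\H_0$ is smooth in $\E$ and, by the remark preceding the theorems, the only crossing singularity in this case sits in the averaged term $\partial\overline{\H}_1/\partial y_i$ (the derivatives of $I_5^{c,n}, I_5^{s,n}$ remain regular because the crossing is with a planet other than the resonant one).

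Next I would take the difference of the $-$ and $+$ branches. Since $\partial\H_0/\partial y_i$ is the same on both sides (it is an honest smooth function, not an extension), it cancels, leaving
\[
\mathrm{Diff}_h\left(\frac{\partial \mathscr{H}}{\partial y_i}\right) = \eps\left(\frac{\partial\overline{\H}_1}{\partial y_i}\right)_h^- - \eps\left(\frac{\partial\overline{\H}_1}{\partial y_i}\right)_h^+.
\]
At this point I would invoke Theorem~\ref{gttheorem}, which provides precisely the two Lipschitz-continuous extensions $\eps\left(\partial\overline{\H}_1/\partial y_i\right)_h^\pm$ and gives their difference in closed form as
\[
-\frac{\mu'\k^2}{\pi}\left[\frac{\partial}{\partial y_i}\left(\frac{1}{\sqrt{\det(\mathcal{A}_h)}}\right)\tilde{d}_h + \frac{1}{\sqrt{\det(\mathcal{A}_h)}}\frac{\partial \tilde{d}_h}{\partial y_i}\right].
\]
Substituting this into the previous line yields exactly the claimed identity, valid throughout $\W$ (the identity on $\Sigma$ itself following by continuity of all the extensions, since both $\tilde d_h$ and its derivative are smooth there and the right-hand side vanishes on $\Sigma$ only through the $\tilde d_h$ factor, not the $\partial\tilde d_h/\partial y_i$ one).

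There is essentially no obstacle here: the corollary is a bookkeeping consequence of Theorem~\ref{gttheorem} together with the observation that in the non-resonant-planet crossing case the resonant Fourier terms $I_5^{c,n},I_5^{s,n}$ and their $y_i$-derivatives are regular, so they contribute nothing to $\mathrm{Diff}_h$. The only point requiring a word of care is the consistency of the formula across $\Sigma=\W\cap\{\tilde d_h=0\}$: one should note that the first term in the bracket vanishes on $\Sigma$ while the second need not, so the two one-sided vector fields genuinely disagree on the crossing set by the amount $-\tfrac{\mu'\k^2}{\pi}\tfrac{1}{\sqrt{\det(\mathcal A_h)}}\tfrac{\partial\tilde d_h}{\partial y_i}$, which is exactly what produces the non-differentiability of the generalized solutions studied in Section~\ref{sec:gen}.
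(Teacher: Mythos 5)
Your proposal is correct and follows exactly the route the paper intends: the corollary is an immediate bookkeeping consequence of the definition of $\left(\frac{\partial \mathscr{H}}{\partial y_i}\right)_h^\pm$ in the non-Jupiter case (only the $\H_0$ and $\overline{\H}_1$ contributions, the smooth $\H_0$ term cancelling and the resonant terms being regular at such crossings) combined with the jump formula of Theorem~\ref{gttheorem}. Nothing is missing.
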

\begin{cor}
\label{cor2}
If $\mathcal{E}_c$ corresponds to a crossing configuration with
Jupiter, then the following relation holds in $\W$
\begin{multline*}
  \mathrm{Diff}_h\left(\frac{\partial \mathscr{H}}{\partial y_i}\right) = 
\eps\left(\frac{\partial\overline{\H}_1}{\partial y_i}\right)_h^- -
\eps\left(\frac{\partial\overline{\H}_1}{\partial y_i}\right)_h^+ +
\eps\left(\frac{\partial\H_{res}}{\partial y_i}\right)_h^- -
\eps\left(\frac{\partial\H_{res}}{\partial y_i}\right)_h^+=\\
  %
  %
-\frac{2\mu' \k^2}{\pi} \left[ \sum_{n=1}^{n_{\rm max}}\cos\big(n (\sigma -{\sf h}\cdot V_h)\big)+
  \frac{1}{2}\right]\left[\frac{\partial}{\partial
    y_i}\left(\frac{1}{\sqrt{\det(\mathcal{A}_h)}}\right)\tilde{d}_h+
  \frac{1}{\sqrt{\det(\mathcal{A}_h)}}\frac{\partial
    \tilde{d}_h}{\partial y_i} \right].
\end{multline*}
%
\end{cor}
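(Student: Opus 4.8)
The plan is to obtain Corollary~\ref{cor2} as a purely algebraic consequence of Theorems~\ref{gttheorem} and~\ref{teosalto}, combined with the explicit expression for $\eps\,(\partial\H_{res}/\partial y_i)_h^\pm$ recorded just above. First I would note that $\H_0$ is smooth near $\E_c$, so it contributes nothing to the jump; hence, by the definition of $(\partial\mathscr{H}/\partial y_i)_h^\pm$ in the Jupiter-crossing case,
\[
\mathrm{Diff}_h\left(\frac{\partial \mathscr{H}}{\partial y_i}\right) =
\eps\left(\frac{\partial\overline{\H}_1}{\partial y_i}\right)_h^- - \eps\left(\frac{\partial\overline{\H}_1}{\partial y_i}\right)_h^+
+ \eps\left(\frac{\partial\H_{res}}{\partial y_i}\right)_h^- - \eps\left(\frac{\partial\H_{res}}{\partial y_i}\right)_h^+ .
\]
To keep the constants under control I would abbreviate
\[
D_h := \frac{\partial}{\partial y_i}\!\left(\frac{1}{\sqrt{\det(\mathcal{A}_h)}}\right)\tilde{d}_h + \frac{1}{\sqrt{\det(\mathcal{A}_h)}}\,\frac{\partial \tilde{d}_h}{\partial y_i},
\]
so that Theorem~\ref{gttheorem} says precisely $\eps(\partial\overline{\H}_1/\partial y_i)_h^- - \eps(\partial\overline{\H}_1/\partial y_i)_h^+ = -\dfrac{\mu'\k^2}{\pi}\,D_h$, and both jump formulas of Theorem~\ref{teosalto} share the same common factor $D_h$.

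For the resonant part I would substitute the definition
\[
\eps\!\left(\frac{\partial\H_{res}}{\partial y_i}\right)_h^\pm = -\frac{2\mu'\k^2}{(2\pi)^2}\sum_{n=1}^{n_{\rm max}}\left[\left(\frac{\partial I_5^{c,n}}{\partial y_i}\right)_h^\pm\cos(n\sigma) + \left(\frac{\partial I_5^{s,n}}{\partial y_i}\right)_h^\pm\sin(n\sigma)\right]
\]
into the difference, and then insert the jump relations of Theorem~\ref{teosalto}, i.e. $\mathrm{Diff}_h(\partial I_5^{c,n}/\partial y_i) = 4\pi\cos(n{\sf h}\cdot V_h)\,D_h$ and $\mathrm{Diff}_h(\partial I_5^{s,n}/\partial y_i) = 4\pi\sin(n{\sf h}\cdot V_h)\,D_h$. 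This gives
\[
\eps\left(\frac{\partial\H_{res}}{\partial y_i}\right)_h^- - \eps\left(\frac{\partial\H_{res}}{\partial y_i}\right)_h^+
= -\frac{8\pi\mu'\k^2}{(2\pi)^2}\,D_h\sum_{n=1}^{n_{\rm max}}\Bigl[\cos(n{\sf h}\cdot V_h)\cos(n\sigma) + \sin(n{\sf h}\cdot V_h)\sin(n\sigma)\Bigr],
\]
and applying the addition formula $\cos(n{\sf h}\cdot V_h)\cos(n\sigma)+\sin(n{\sf h}\cdot V_h)\sin(n\sigma) = \cos\!\bigl(n(\sigma-{\sf h}\cdot V_h)\bigr)$ together with $8\pi/(2\pi)^2 = 2/\pi$ reduces this to $-\dfrac{2\mu'\k^2}{\pi}\,D_h\sum_{n=1}^{n_{\rm max}}\cos\!\bigl(n(\sigma-{\sf h}\cdot V_h)\bigr)$.

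Finally I would add the two contributions, writing the averaged term as the ``$n=0$ slot'', $-\dfrac{\mu'\k^2}{\pi}D_h = -\dfrac{2\mu'\k^2}{\pi}D_h\cdot\tfrac12$, to obtain
\[
\mathrm{Diff}_h\left(\frac{\partial \mathscr{H}}{\partial y_i}\right) = -\frac{2\mu'\k^2}{\pi}\left[\sum_{n=1}^{n_{\rm max}}\cos\!\bigl(n(\sigma-{\sf h}\cdot V_h)\bigr) + \frac12\right]D_h,
\]
which is the stated identity once $D_h$ is unfolded. There is no genuine analytic difficulty here: all the substance lies in Theorems~\ref{gttheorem} and~\ref{teosalto}, and the only point requiring care is the bookkeeping of the prefactors — the $2\mu'\k^2/(2\pi)^2$ from the definition of $\eps\,\partial\H_{res}/\partial y_i$, the $4\pi$ from the jump formulas, and the factor $\tfrac12$ that lets the averaged term $\overline{\H}_1$ combine with the cosine sum. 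For completeness I would also recall that $\partial\mathscr{H}/\partial\sigma$ is regular at $\E_c$, so the components $y_i$, $i=1,\dots,5$, carry the only jumps of the vector field.
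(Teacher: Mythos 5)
Your proposal is correct and follows exactly the route the paper intends (and leaves implicit): the corollary is the purely algebraic combination of the jump formula of Theorem~\ref{gttheorem} for $\eps\,\overline{\H}_1$ with the two jump formulas of Theorem~\ref{teosalto} inserted into the definition of $\eps\,(\partial\H_{res}/\partial y_i)_h^\pm$, the cosine addition formula collapsing the sum and the factor $8\pi/(2\pi)^2=2/\pi$ together with the averaged term playing the role of the $1/2$. Your bookkeeping of the constants (with $\mu'=\mu_5$ here) and the observation that $\H_0$ and $\partial\mathscr{H}/\partial\sigma$ contribute no jump are exactly what is needed, so nothing is missing.
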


\medbreak
We recall that, for each $N\in\N$ and $x\neq 2h\pi$, with $h\in\Z$,
we have
\begin{equation}
\sum_{n=1}^N\cos(nx) =
\frac{1}{2}(D_N(x) - 1)
\end{equation}
where
\[
D_N(x) = \frac{\sin\bigl((N+{1}/{2})x\bigr)}{\sin({x}/{2})}
\]
is the Dirichlet kernel (see \cite{stein}). 

\begin{remark}
  With the notation above we have
  \[
  \sum_{n=1}^{n_{\rm max}}\cos\big(n (\sigma -{\sf h}\cdot V_h)\big) =
  \frac{1}{2}\Bigl(D_{n_{\rm max}}(\sigma-{\sf h}\cdot V_h) - 1\Bigr),
  \]
  that for $n_{\rm max}\to\infty$ converges in the sense of distributions
  to the Dirac delta $\delta_{\sigma_c}$ centered in $\sigma_c:={\sf
    h}\cdot V_h$.
  \end{remark}

\begin{remark}
The component
  $\frac{\partial \mathscr{H}}{\partial \sigma}$ is 
locally Lipschitz-continuous.   
\end{remark}


\subsection{Proof of Theorem \ref{teosalto}}

We shall prove the result only for the maps (\ref{dercos}), the proof
for (\ref{dersin}) being similar.  Since we assume that Jupiter cannot
collide with the Sun, the term ${\bf r}_5$ will never vanish, so that
we study only the derivatives
\[ 
\frac{\partial }{\partial y_i}\int_{\T^2}\frac{1}{d(\E,V)}\cos(n{\sf
  h}\cdot V)dV
\]
for a fixed value of $n\in\mathbb{N}$.
We shall refer to some estimates and results proved in \cite{gt13}.
For the reader's convenience we collect them in Appendix \ref{appendix}. Moreover, we shall denote by $C_k$, $k=1,\dots,12$, some
positive constants independent on $\E$.

Let $\E_c$ be a non-degenerate crossing configuration.  Let us choose
two neighborhoods $\W$ of $\E_c$ and ${\cal U}$ of $(\E_c,V_h(\E_c))$,
as in Lemma~\ref{inequalities} in the Appendix. To investigate the crossing singularity we can restrict the integral above to the set
\[
\D= \{V\in\T^2:\: (V-V_h)\cdot\A_h (V-V_h)\leq r^2 \}
\]
for some $r>0$.
We first note that
\[
\begin{split}
\frac{\partial }{\partial y_i} \int_{{\cal D}}\frac{1}{d(\E,V)}\cos(n{\sf h}\cdot V)dV  &=
  \frac{\partial}{\partial
    y_i}\int_{{\cal D}}
  \Bigl(\frac{1}{d}-\frac{1}{\delta_h}\Bigr)
  \cos(n{\sf h}\cdot V)dV\\
  &+
  \frac{\partial}{\partial y_i}\biggl(\int_{{\cal D}}\frac{\cos(n{\sf h}\cdot V) - \cos(n{\sf h}\cdot V_h)}{\delta_h}dV\biggr)\\
  &+ \frac{\partial}{\partial y_i}\bigl(\cos(n{\sf h}\cdot V_h)\bigr)
\int_{{\cal D}}\frac{1}{\delta_h}dV\\
  &+  \cos(n{\sf h}\cdot V_h)\frac{\partial}{\partial
    y_i}\int_{{\cal D}}\frac{1}{\delta_h}dV,
    \end{split}
\]
and prove that the first three addenda have a continuous extension to
$\W$.  
From the estimate (\ref{axii}) the map 
\[
\W\setminus\Sigma\ni\E \mapsto \frac{\partial}{\partial y_i}
\int_{{\cal D}}\left( \frac{1}{d(\E ,V )}- \frac{1}{\delta_h(\E ,V
  )}\right) \cos(n{\sf h}\cdot V )dV
\]
admits a continuous extension to $\W$. We now prove that also the map
\begin{equation}\label{daprov2}
\W\setminus\Sigma\ni\E \mapsto \frac{\partial}{\partial y_i}
\int_{\T^2}\frac{\cos(n{\sf h}\cdot V )-\cos(n{\sf h}\cdot
  V_h)}{\delta_h(\E ,V )} dV
\end{equation}
admits a continuous extension to $\W$.
Indeed we note that
\begin{equation}
  \begin{split}
\frac{\partial}{\partial  y_i}\frac{\cos(n{\sf h}\cdot V )-\cos(n{\sf h}\cdot V_h)}{\delta_h(\E ,V )} &=  
\frac{\sin(n{\sf h}\cdot V_h)n{\sf h}\cdot \frac{\partial V_h}{\partial  y_i}}{\delta_h(\E ,V )} \\   
&-[\cos(n{\sf h}\cdot V )-\cos(n{\sf h}\cdot V_h)]\frac{\partial}{\partial  y_i}\frac{1}{\delta_h(\E ,V )}.
\end{split}
\label{derdiffcos}
\end{equation}
By (\ref{aiii}), (\ref{axiii}) the first addendum in the r.h.s. of
\eqref{derdiffcos} is summable.
For the second, by (\ref{av}) we get
\begin{equation*}
  \left|\frac{\partial}{\partial  y_i}\frac{1}{\delta_h}\right|
  =\left|\frac{1}{2\delta_h^3}\frac{\partial\delta_h^2}{\partial y_i}\right| 
\leq \frac{C_{1}}{d_h^2 + |V-V_h|^2}.
\end{equation*}
From the estimate
\begin{equation*}
  |\cos(n{\sf h}\cdot V )-\cos(n{\sf h}\cdot V_h)| \leq C_{2}|V -V_h|
\end{equation*}
we can conclude using (\ref{avi}).

The existence of a continuous extension to $\W$ of the maps
\begin{align*}
  \W\setminus\Sigma\ni\E \mapsto&\frac{\partial}{\partial  y_i} \bigl(\cos(n{\sf h}\cdot V_h)\bigr)\int_{{\cal D}}\frac{1}{\delta_h(\E ,V )} dV =\nonumber\\
& -\sin(n{\sf h}\cdot V_h)n{\sf h}\cdot \frac{\partial V_h}{\partial  y_i}\int_{\T^2}\frac{1}{\delta_h(\E ,V )} dV  +   \cos(n{\sf h}\cdot V_h)\frac{\partial}{\partial  y_i}\int_{\T^2}\frac{1}{\delta_h(\E ,V )} dV .
\label{dercos1sud}
\end{align*}
comes from (\ref{aiii}). 

The last term cannot be extended with continuity at crossings. Using
Lemma~\ref{lem3} we define the two maps
\[
\begin{split}
 \W\ni\E\mapsto \biggl(\frac{\partial}{\partial
    y_i}\int_{{\cal D}}\frac{1}{\delta_h}dV\biggr)^\pm_h &=
  \frac{\partial}{\partial y_i}\biggl(\frac{2\pi}{\sqrt{\det{\cal A}_h}}\biggr)
  \Bigl(\sqrt{d_h^2+r^2}\mp\tilde{d}_h\Bigr)\\
  &+
  \frac{2\pi}{\sqrt{\det{\cal A}_h}}
  \biggl(\frac{\tilde{d}_h}{\sqrt{d_h^2+r^2}}
  \frac{\partial\tilde{d}_h}{\partial y_i} \mp \frac{\partial\tilde{d}_h}{\partial y_i}\biggr)
    \end{split}
\]
that are continuous extensions to $\W$ of the restrictions of
$\frac{\partial}{\partial y_i}\int_{{\cal D}}\frac{1}{\delta_h}dV$ to
$\W^\pm$ respectively.
Then we set
\[
\begin{split}
\W\ni\E\mapsto\biggl(\frac{\partial I_5^{c,n}}{\partial y_i}\biggr)^\pm_h &=
  \frac{\partial}{\partial
    y_i}\int_{{\cal D}}
  \Bigl(\frac{1}{d}-\frac{1}{\delta_h}\Bigr)
  \cos(n{\sf h}\cdot V)dV\\
  &+
  \frac{\partial}{\partial y_i}\biggl(\int_{{\cal D}}\frac{\cos(n{\sf h}\cdot V) - \cos(n{\sf h}\cdot V_h)}{\delta_h}dV\biggr)\\
  &+ \frac{\partial}{\partial y_i}\bigl(\cos(n{\sf h}\cdot V_h)\bigr)
\int_{{\cal D}}\frac{1}{\delta_h}dV\\
  &+  \cos(n{\sf h}\cdot V_h)\biggl(\frac{\partial}{\partial
    y_i}\int_{{\cal D}}\frac{1}{\delta_h}dV\biggr)^\pm_h.
    \end{split}
\]
To conclude the proof we just need to prove that these maps are
Lipschitz-continuous.  We establish the result by proving that the
function
\[
F(\E)= \int_{\D}\cos(n{\sf h}\cdot V)\frac{\partial }{\partial y_i\partial y_j}\frac{1}{d(\E,V)}dV
\]
is uniformly bounded in $\W\setminus\Sigma$.
Let us consider the Taylor expansion 
\begin{equation*}
\cos(n{\sf h}\cdot V) = \cos(n{\sf h}\cdot V_h) -n\sin(n{\sf h}\cdot V_h) {\sf h}\cdot(V-V_h) +{\cal R}^{(h)}_2,
\end{equation*}
where
\[
  {\cal R}^{(h)}_2 = {\cal R}^{(h)}_2(\E,V)
  \]
is the remainder in integral form, so that in ${\cal U}$ we have
\begin{equation}
  |{\cal R}^{(h)}_2|\leq C|V-V_h|^2
  \label{R2est}
  \end{equation}
for some $C>0$. Using the approximated distance $\delta_h$ defined in
(\ref{approxdist}) we can write $F(\E)$ as sum of four terms:
\begin{equation*}
  F= F_1 + F_2 + F_3 + F_4,
\end{equation*}
where
\begin{align*}
  &F_1 = \cos(n{\sf h}\cdot V_h)\int_{\D}\frac{\partial^2 }{\partial y_i\partial y_j}\frac{1}{d(\E,V)}dV 
,\\
&F_2= - n\sin(n{\sf h}\cdot V_h)\int_{\D}{\sf h}\cdot(V-V_h)\frac{\partial^2 }{\partial y_i\partial y_j}\left(\frac{1}{d(\E,V)}-\frac{1}{\delta_h(\E)}\right)dV
,\\
& F_3 = -n\sin(n{\sf h}\cdot V_h)\int_{\D}{\sf h}\cdot(V-V_h)\frac{\partial^2 }{\partial y_i\partial y_j}\frac{1}{\delta_h(\E)}dV 
,\\
       &F_4 = \int_{\D}{\cal R}^{(h)}_2\frac{\partial^2 }{\partial y_i\partial y_j}\frac{1}{d(\E,V)}dV.
\end{align*}
We prove that each term $F_i$ is bounded by a constant independent on $\E$.
%
%
The boundedness of $F_1$ comes trivially from (\ref{aiv}).
%
From the relation
\[ 
\frac{\partial }{\partial y_i\partial
  y_j}\frac{1}{d}=\frac{3}{4}\frac{1}{d^5}\frac{\partial d^2}{\partial
  y_i}\frac{\partial d^2}{\partial
  y_j}-\frac{1}{2}\frac{1}{d^3}\frac{\partial^2 d^2}{\partial
  y_i\partial y_j}
\] 
and the estimates (\ref{aii}),(\ref{av}),(\ref{avii}) we obtain
\[ 
\left|\frac{\partial }{\partial y_i\partial y_j}\frac{1}{d}\right| 
\leq C_3\biggl[\frac{1}{d^5}(d_h+|V-V_h|)^2 +
\frac{1}{d^3}\biggr] \leq \frac{C_4}{(d_h^2+|V-V_h|^2)^{3/2}}.
\]
Then (\ref{R2est}) and (\ref{avi}) yield the boundedness of $F_4$:
\[
\left| \int_{\D}{\cal R}_2^{(h)}\frac{\partial }{\partial y_i\partial y_j}\frac{1}{d(\E,V)}dV    \right|
%
\leq C_5\int_{\cal D}\frac{dV}{d_h +|V-V_h|} \leq C_6. 
\]
To show the boundedness of $F_2$ we just need to prove that
\begin{equation}\label{daprov}
\left|\frac{\partial^2 }{\partial y_i\partial y_j}\left(\frac{1}{d}-\frac{1}{\delta_h}\right)\right|\leq\frac{C_{7}}{d_h^2+|V-V_h|^2},
\end{equation}
so that
\[
\left|\int_{\D}{\sf h}\cdot(V-V_h)\frac{\partial^2 }{\partial
  y_i\partial y_j}\left(\frac{1}{d}-\frac{1}{\delta_h}\right)dV\right|
\leq C_{8} \int_{\D} \frac{dV}{d_h+|V-V_h|}\leq C_9.
\]
Using $d^2 = \delta_h^2+{\cal R}_3^{(h)}$ we get
\begin{align*}
\frac{\partial^2 }{\partial y_i\partial y_j}\left(\frac{1}{d}-\frac{1}{\delta_h}\right) &=  
\frac{3}{4}\left( \frac{1}{d^5}\frac{\partial d^2}{\partial y_i}
-\frac{1}{\delta_h^5}\frac{\partial \delta_h^2}{\partial y_i} \right)\frac{\partial \delta_h^2}{\partial y_j}
 + \frac{1}{2}\left( \frac{1}{d^3}-\frac{1}{\delta_h^3} \right)\frac{\partial^2 \delta_h^2}{\partial y_i\partial y_j} \\
& +\frac{3}{4}\frac{1}{d^5}\frac{\partial d^2}{\partial y_i}\frac{\partial {\cal R}_3^{(h)}}{\partial y_j} -\frac{1}{2}\frac{1}{d^3}\frac{\partial^2 {\cal R}_3^{(h)}}{\partial y_i\partial y_j}.
\end{align*}
We prove that each of the four terms in the previous sum satisfies an
estimate like (\ref{daprov}). For the second term we use estimates
(\ref{avii}),(\ref{aviii}), for the third 
(\ref{av}),(\ref{aix}), and for the last (\ref{ax}). To estimate
the first term we note that
\[ 
\left( \frac{1}{d^5}\frac{\partial d^2}{\partial y_i}
-\frac{1}{\delta_h^5}\frac{\partial \delta_h^2}{\partial y_i} \right)\frac{\partial \delta_h^2}{\partial y_j} = 
\left( \frac{1}{d^5} -\frac{1}{\delta_h^5}\right)\frac{\partial \delta_h^2}{\partial y_i}\frac{\partial \delta_h^2}{\partial y_j}+ \frac{1}{d^5}\frac{\partial {\cal R}_3^{(h)}}{\partial y_i}\frac{\partial \delta_h^2}{\partial y_j}
\]   
and use
\[
\left| \frac{1}{d^5} -\frac{1}{\delta_h^5}\right|\leq\left| \frac{1}{d} -\frac{1}{\delta_h}\right|\left| \frac{1}{d^4} +\frac{1}{d^3\delta_h}+\frac{1}{d^2\delta_h^2}+\frac{1}{d\delta_h^3}+\frac{1}{\delta_h^4}\right|. 
\]
We can conclude using (\ref{aii}),(\ref{av}),(\ref{aix}),(\ref{axi}).

Now we show the boundedness of $F_3$. We write
\begin{align}
 \int_{\D}{\sf h}\cdot(V-V_h)\frac{\partial^2 }{\partial y_i\partial
   y_j}\frac{1}{\delta_h}&= \frac{3}{4}\int_{\D}{\sf
   h}\cdot(V-V_h)\frac{1}{\delta_h^5}\frac{\partial
   \delta_h^2}{\partial y_i}\frac{\partial \delta_h^2}{\partial
   y_j}dV \nonumber \\
  &-\frac{1}{2}\int_{\D}{\sf
   h}\cdot(V-V_h)\frac{1}{\delta_h^3}\frac{\partial^2
   \delta_h^2}{\partial y_i\partial y_j}dV
\label{mammamia}
\end{align}
and study the two integrals in the r.h.s. separately. To estimate the first
we use (\ref{approxdist}) and get 
\[
\frac{\partial \delta_h^2}{\partial y_j} = \frac{\partial d_h^2}{\partial y_j} -2\frac{\partial V_h}{\partial y_j}\cdot\A_h(V-V_h)+(V-V_h)\cdot\frac{\partial \A_h}{\partial y_j}(V-V_h),
\]
so that
\begin{align*}
\frac{\partial \delta_h^2}{\partial y_i}\frac{\partial \delta_h^2}{\partial y_j} & = 
 \frac{\partial d_h^2}{\partial y_i} \frac{\partial d_h^2}{\partial y_j} -2\left(\frac{\partial d_h^2}{\partial y_i} \frac{\partial V_h}{\partial y_j}+ 
\frac{\partial d_h^2}{\partial y_j} \frac{\partial V_h}{\partial y_i} \right)\cdot\A_h(V-V_h)        \\
& +  \frac{\partial d_h^2}{\partial y_i}(V-V_h)\cdot\frac{\partial \A_h}{\partial y_j}(V-V_h)+\frac{\partial d_h^2}{\partial y_j}(V-V_h)\cdot\frac{\partial \A_h}{\partial y_i}(V-V_h)   \\
&+ 4\left[\frac{\partial V_h}{\partial y_i}\cdot\A_h(V-V_h)\right]\left[\frac{\partial V_h}{\partial y_j}\cdot\A_h(V-V_h)\right] \\
& -2\left[\frac{\partial V_h}{\partial y_i}\cdot\A_h(V-V_h)\right]\left[(V-V_h)\cdot\frac{\partial \A_h}{\partial y_j}(V-V_h)\right] \\
& -2\left[\frac{\partial V_h}{\partial y_j}\cdot\A_h(V-V_h)\right]\left[(V-V_h)\cdot\frac{\partial \A_h}{\partial y_i}(V-V_h)\right] \\
& +  \left[(V-V_h)\cdot\frac{\partial \A_h}{\partial y_i}(V-V_h)\right]\left[(V-V_h)\cdot\frac{\partial \A_h}{\partial y_j}(V-V_h)\right].
\end{align*}
Then we use the change of variables $\xi =\A_h^{1/2}(V-V_h)$ and polar
coordinates $(\rho,\theta)$ defined by $\xi=\rho(\cos\theta, \sin\theta)$.
We distinguish between terms with even and odd degree in $(V-V_h)$.
First we consider the ones with even degree. The term of degree $2$ is
estimated as follows
\begin{align*}
&\left|\int_{\D}{\sf h}\cdot(V-V_h)\frac{1}{\delta_h^5}\left(\frac{\partial
    d_h^2}{\partial y_i} \frac{\partial V_h}{\partial
    y_j}+\frac{\partial d_h^2}{\partial y_j} \frac{\partial
    V_h}{\partial y_i} \right)\cdot\A_h(V-V_h)dV\right| = \\ &
  \biggl|\int_{\D} 2\tilde{d}_h\left(\frac{\partial \tilde{d}_h}{\partial y_i} \frac{\partial
    V_h}{\partial y_j}+\frac{\partial \tilde{d}_h}{\partial y_j}
  \frac{\partial V_h}{\partial y_i}
  \right)\cdot \A_h(V-V_h){\sf
    h}\cdot(V-V_h)\frac{1}{\delta_h^5}dV\biggr| =\\ &
  2\frac{d_h}{\sqrt{\det\A_h }}
  \int_0^r \frac{\rho^3}{(d_h^2+\rho^2)^{5/2}}d\rho
  \left|\sum_{|\gamma|=2}b_\gamma\int_0^{2\pi}
  (\cos\theta)^{\gamma_1}(\sin\theta)^{\gamma_2}d\theta \right|\leq \\ &
  2\frac{d_h}{\sqrt{\det\A_h}}\frac{C_{10}}{d_h} \leq C_{11},
\end{align*}
while for the term of degree $4$ we note that
\begin{align*}
& \Biggl|\int_{\D}\frac{1}{\delta_h^5}{\sf h}\cdot(V-V_h)\left[\frac{\partial
      V_h}{\partial
      y_j}\cdot\A_h(V-V_h)\right]\left[(V-V_h)\cdot\frac{\partial
      \A_h}{\partial y_i}(V-V_h)\right]dV \Biggr|=\\ &
  \frac{1}{\sqrt{\det\A_h }}
\int_0^r\frac{\rho^5}{(d_h^2+\rho^2)^{5/2}}d\rho
  \left|\sum_{|\gamma|=4}c_\gamma\int_0^{2\pi}
  (\cos\theta)^{\gamma_1}(\sin\theta)^{\gamma_2}d\theta \right|
 \leq C_{12} 
\end{align*}
for some functions $b_\gamma$, $c_\gamma$, uniformly bounded in ${\cal
  W}\setminus\Sigma$, and for
$\gamma=(\gamma_1,\gamma_2)\in(\mathbb{N}\cup \{0\})^2$.
The terms with odd degree in $(V-V_h)$ vanish, as can be shown
by similar computations, using
\[
\int_0^{2\pi} (\cos\theta)^{\gamma_1}(\sin\theta)^{\gamma_2}d\theta = 0
\]
with $\gamma_1 + \gamma_2$ odd.
To estimate the second integral in (\ref{mammamia}) we proceed in a
similar way, using
\begin{align*}
\frac{\partial^2 \delta_h^2}{\partial y_i\partial y_j}& =
\frac{\partial^2 d_h^2}{\partial y_i\partial y_j} -2\frac{\partial^2
  V_h}{\partial y_i\partial y_j}\cdot\A_h(V-V_h) -2 \frac{\partial
  V_h}{\partial y_j}\cdot\frac{\partial \A_h}{\partial y_i}(V-V_h) \\
& -2 \frac{\partial V_h}{\partial y_i}\cdot\frac{\partial
  \A_h}{\partial y_j}(V-V_h) +\left[(V-V_h)\cdot\frac{\partial^2
    \A_h}{\partial y_i\partial y_j }(V-V_h)\right].
\end{align*}

\begin{remark}
  If $\mathcal{E}_c$ is an orbit configuration with two crossings,
  assuming that $d_h(\mathcal{E}_c) = 0$ for $h=1,2$, we can extract
  the singularity by considering the approximated distances $\delta_1,
  \delta_2$ and considering $1/d$ as sum of the three terms $(1/d -
  1/\delta_1 - 1/\delta_2)$, $1/\delta_1$, $1/\delta_2$.
\end{remark}

\section{Generalized solutions and evolution of the orbit distance}
\label{sec:gen}

Following \cite[Sections 5-6]{gt13} we can construct generalized
solutions by patching classical solutions defined in the domain $\W^+$
with classical solutions defined on $\W^-$ and vice-versa.
Let $(E(t),\sigma(t))$, with $E(t)=(S(t),G(t),Z(t),g(t),z(t))$,
represent the evolution of the asteroid according to
(\ref{hamres}). In a similar way we denote by $E'(t)$ a known function
of time representing the evolution of the trajectory of the planet.
Setting $\E(t)=( E(t),E'(t) )$ we let $T(\Y)$ be the set of times
$t_c$ such that $d_{min}(\E(t_c))=0$ and suppose that it has no
accumulation points.

We say that $\Y(t)$ is a {\it generalized solution} of (\ref{hamres})
if it is a classical solution for $t\notin T(\Y)$ and for each $t_c\in
T(\Y)$ there exist finite values of
\[
\lim_{t\to t_c^+}\dot\Y(t), \qquad \lim_{t\to t_c^-}\dot\Y(t).
\]  

In order to construct a generalized solution we consider a solution
$\Y(t)$ of the Cauchy problem given by (\ref{hamres}) with a non
crossing initial condition $\Y(t_0)$. Suppose that it is defined on a
maximal interval $J$ such that $\sup J =t_c\in T(\Y)$ and that
$\Y(t)\in\W^+$ as $t\to t_c$ . Suppose that the crossing is occurring
with a planet different from Jupiter (resp. Jupiter itself).  Applying
Theorem \ref{gttheorem}-(a) (resp. Theorems \ref{gttheorem}-(a) and
\ref{teosalto}-(a)) we have that there exists
\[
\lim_{t\to t_c^-}\dot\Y(t) = \dot{\Y}_c 
\]
and the solution can be extended beyond $t_c$ considering the Cauchy
problem
\begin{equation*}
\dot{\Y} = \mathbb{J}_3(\nabla_\Y\mathscr{H})^+, \quad \Y(\tau)=\Y_\tau
\end{equation*}
for some $\tau\to t_c$, so that we call $\Y(t_c)=\Y_c$. 
Using again Theorem \ref{gttheorem}-(a)
(resp. Theorems \ref{gttheorem}-(a) and \ref{teosalto}-(a)), we can
extend the solution beyond the singularity considering the new Cauchy
problem
\begin{equation*}
\dot{\Y} = \mathbb{J}_3(\nabla_\Y\mathscr{H})^-, \quad \Y(t_c)=\Y_c.
\end{equation*}  
whose solution fulfills, from Corollary \ref{cor1} (resp. Corollary
\ref{cor2})
\[
\lim_{t\to t_c^-}\dot\Y(t) = \dot{\Y}_c - \mathrm{Diff}_h\left(\nabla_\Y\mathscr{H}\right)(\E(t_c),V).
\]
Note that the evolution of the orbital elements according to a
generalized solution is continuous but not differentiable in a
neighborhood of a crossing singularity. More precisely, the evolution
of the elements $(G,Z,\sigma,g,z)$ is only Lipschitz-continuous while
the evolution of $S$ is $C^1$, since
$\frac{\partial\mathscr{H}}{\partial\sigma}$ is continuous also at orbit crossings.
\\ Once a generalized solution $\Y(t)=(E(t),\sigma(t))$ is defined, we
can consider the evolution of the distance $\tilde{d}_h(\E(t))$.  Let
us define
$$
\bar d_h(t) = \tilde{d}_h(\E(t))
$$ 
and suppose that it is defined in an interval containing a crossing
time $t_c$ corresponding to a non-degenerate crossing
configuration. We have the following
\begin{prop}
Let $\Y(t)$ be a generalized solution of (\ref{hamres}) and $\E(t)$ be
defined as above. Suppose that $t_c$ is a crossing time such that
$\E_c=\E(t_c)$ is a non-degenerate crossing configuration. Then there
exists an open interval $I\ni t_c$ such that $\bar d_h\in
C^1(I,\mathbb{R})$.
\end{prop}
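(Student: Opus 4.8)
The plan is to show that the one-sided limits of $\dot{\bar d}_h(t)$ as $t\to t_c^\pm$ coincide, which by the definition of generalized solution and the continuity of $\bar d_h$ will give $\bar d_h\in C^1(I,\mathbb R)$. First I would fix a neighborhood $\W$ of $\E_c$ as in Theorems \ref{gttheorem} and \ref{teosalto}, shrunk so that the generalized solution $\E(t)$ stays in $\W$ for $t$ in a small interval $I\ni t_c$, and so that $\W=\W^-\cup\Sigma\cup\W^+$ with $\tilde d_h$ analytic across $\Sigma$ (here we use that $\E_c$ is non-degenerate, hence not a tangent crossing). By the chain rule, for $t$ with $\E(t)\notin\Sigma$ we have
\[
\dot{\bar d}_h(t) = \nabla_E\tilde d_h(\E(t))\cdot\dot E(t) + \partial_{E'}\tilde d_h(\E(t))\cdot\dot E'(t),
\]
where $\dot E(t)$ is the vector field of (\ref{hamres}) (one-sided, the $\pm$ extension) and $\dot E'(t)$ is the smooth known planetary evolution. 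Since $\tilde d_h$ is analytic on all of $\W$, the gradient $\nabla_E\tilde d_h$ is continuous across $\Sigma$; the only possible discontinuity in $\dot{\bar d}_h$ comes from the jump $\mathrm{Diff}_h(\nabla_\Y\mathscr H)$ in the velocity field at the crossing.

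Thus the key computation is to show that this jump, contracted with $\nabla_E\tilde d_h$, vanishes at $\E_c$. Concretely, writing the jump in the $E$-components (the $\sigma$-component of $\mathscr H$ being regular, so it contributes nothing) and using Corollaries \ref{cor1}--\ref{cor2}, the jump in $\dot E_{y_i}$ is, up to the sign dictated by $\mathbb J_3$, proportional to
\[
\Bigl[\tfrac{\partial}{\partial y_i}\bigl(\det(\A_h)^{-1/2}\bigr)\tilde d_h + \det(\A_h)^{-1/2}\tfrac{\partial\tilde d_h}{\partial y_i}\Bigr],
\]
times a scalar depending on $\sigma$ and $V_h$ (either the constant from Corollary \ref{cor1} or the Dirichlet-kernel factor from Corollary \ref{cor2}). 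Contracting with $\partial\tilde d_h/\partial y_j$ and summing over the symplectically paired indices, the antisymmetry of $\mathbb J_3$ makes the terms cancel in pairs: the jump in $\dot G$ pairs with $\partial\tilde d_h/\partial g$ and the jump in $\dot g$ with $\partial\tilde d_h/\partial G$ with opposite signs, and likewise for $(Z,z)$; the scalar prefactor (which does not depend on the index $y_i$) factors out of the whole sum. Hence $\nabla_E\tilde d_h\cdot\mathrm{Diff}_h(\mathbb J_3\nabla\mathscr H)=0$, so the two one-sided derivatives of $\bar d_h$ agree at $t_c$, and $\bar d_h$ is $C^1$ on $I$. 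This mirrors the argument of \cite[Section 6]{gt13} but must now accommodate the extra $\H_{res}$ contribution and the six-dimensional $E$.

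I expect the main obstacle to be bookkeeping rather than a conceptual one: carefully matching the $(y_1,\dots,y_5)=(S,G,Z,g,z)$ indices to the symplectic pairing under $\mathbb J_3$ (noting $S$ is paired with $\sigma$, not with another element of $E$, so the $S$-component needs separate and in fact trivial treatment since $\partial\mathscr H/\partial\sigma$ is Lipschitz and $\dot S$ is continuous), and verifying that in the Jupiter-crossing case the $\sigma$-dependent prefactor $2\sum_{n=1}^{n_{\max}}\cos(n(\sigma-{\sf h}\cdot V_h))+1$ is continuous along the solution near $t_c$ and therefore harmless. One should also check that the chain-rule identity for $\dot{\bar d}_h$ extends by continuity to $t=t_c$ from each side — this is exactly where the finiteness of the one-sided limits of $\dot\Y$ from the definition of generalized solution, together with continuity of $\nabla_E\tilde d_h$ across $\Sigma$, is used. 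Finally, the hypothesis that $t_c$ is isolated in $T(\Y)$ lets us pick $I$ small enough to contain no other crossing time.
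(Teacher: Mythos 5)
Your strategy coincides with the paper's: write $\dot{\bar d}_h$ by the chain rule, observe that the planetary term and $\nabla_E\tilde d_h$ are continuous across $\Sigma$ (non-degeneracy excludes tangent crossings, so $\tilde d_h$ is smooth on $\W$), note that the $S$-component contributes no jump because $\partial\mathscr{H}/\partial\sigma$ is continuous, and then show that the contraction of $\nabla_E\tilde d_h$ with the jump $\mathrm{Diff}_h$ of the Hamiltonian vector field vanishes, the scalar prefactor (the constant of Corollary \ref{cor1}, or the Dirichlet-type factor of Corollary \ref{cor2}) factoring out of the sum. This is exactly the computation the paper performs, phrased there through the Poisson bracket $\{\tilde d_h,\tilde d_h\}=0$.

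One step, however, is stated imprecisely. The jump in $\partial\mathscr{H}/\partial y_i$ is proportional to $\frac{\partial}{\partial y_i}\bigl(\det(\A_h)^{-1/2}\bigr)\tilde d_h+\det(\A_h)^{-1/2}\frac{\partial\tilde d_h}{\partial y_i}$, i.e. to the gradient of $\tilde d_h\det(\A_h)^{-1/2}$, so the symplectic contraction equals (up to the scalar prefactor) $\{\tilde d_h,\,\tilde d_h\det(\A_h)^{-1/2}\}=\tilde d_h\,\{\tilde d_h,\det(\A_h)^{-1/2}\}$. The antisymmetry you invoke only cancels the part coming from $\det(\A_h)^{-1/2}\partial_{y_i}\tilde d_h$; the cross terms such as $\partial_G\tilde d_h\,\partial_g\bigl(\det(\A_h)^{-1/2}\bigr)$ versus $\partial_g\tilde d_h\,\partial_G\bigl(\det(\A_h)^{-1/2}\bigr)$ do not cancel pairwise in general, so $\nabla_E\tilde d_h\cdot\mathrm{Diff}_h(\mathbb{J}_3\nabla_\Y\mathscr{H})$ is not identically zero on $\W$, contrary to what your "cancel in pairs" claim suggests. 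It does vanish where you need it, at $t=t_c$, because the surviving term carries the factor $\tilde d_h$ and $\tilde d_h(\E(t_c))=0$ on $\Sigma$. Adding this one observation — which is how the paper's computation, evaluated at $t=t_c$, reduces to a multiple of $\{\tilde d_h,\tilde d_h\}$ — closes the argument; the rest of your proposal (treatment of the $S$/$\sigma$ pair, continuity of the $\sigma$-dependent prefactor, one-sided limits from the definition of generalized solution, isolation of $t_c$) matches the paper's proof.
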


\begin{proof}
We choose the interval $I$ such that $\E(I)\in\W$ with $\W$ defined in
Theorem \ref{gttheorem} (resp. \ref{teosalto}) and suppose that
$\E(t)\in\W^+$ for $t<t_c$ and $\E(t)\in\W^-$ for $t>t_c$. We can
compute, for $t\neq t_c$,
\begin{align*}
 \dot{\bar{d_h}}(t) & = \nabla_\E\tilde{d}_h(\E(t))\cdot\dot\E(t) = \nabla_E\tilde{d}_h(\E(t))\cdot\dot E(t) + \nabla_{E'}\tilde{d}_h(\E(t))\cdot\dot E'(t) \\
 &=  \nabla_E \tilde{d}_h(\E(t)) \cdot\Bigl( -\frac{\partial \mathscr{H}}{\partial \sigma},-\frac{\partial \mathscr{H}}{\partial g},-\frac{\partial \mathscr{H}}{\partial z},\frac{\partial \mathscr{H}}{\partial G},\frac{\partial \mathscr{H}}{\partial Z} \Bigr)^T + \nabla_{E'}\tilde{d}_h(\E(t))\cdot\dot E'(t).
\end{align*}
The second addendum is continuous while for the first we need to
distinguish between crossing a planet different from Jupiter (the
resonant planet) and crossing Jupiter itself. In the first case, we
apply Corollary \ref{cor1} and obtain
\begin{align*}
\lim_{t\to t_c^+} \dot{\bar{d_h}}(t) - \lim_{t\to t_c^-} \dot{\bar{d_h}}(t)   &=  
\biggl[\nabla_E\tilde d_h \cdot \mathrm{Diff}_h\Bigl.\Bigl(-\frac{\partial \mathscr{H}}{\partial \sigma},-\frac{\partial \mathscr{H}}{\partial g},-\frac{\partial \mathscr{H}}{\partial z},\frac{\partial \mathscr{H}}{\partial G},\frac{\partial \mathscr{H}}{\partial Z} \Bigr)^T\biggr]_{t=t_c} \\
&=\biggl[\nabla_E\tilde d_h \cdot \mathrm{Diff}_h\Bigl.\Bigl( 0,-\frac{\partial \mathscr{H}}{\partial g},-\frac{\partial \mathscr{H}}{\partial z},\frac{\partial \mathscr{H}}{\partial G},\frac{\partial \mathscr{H}}{\partial Z} \Bigr)^T\biggr]_{t=t_c}  \\
&  =\biggl[ -\frac{2\mu_5 \k^2}{\pi\sqrt{\det(\mathcal{A}_h)}}\{\tilde{d_h},\tilde d_h  \}\biggr]_{t=t_c} = 0, 
\end{align*}
where $\{,\}$ are the Poisson brackets.

In the second case, we apply Corollary \ref{cor2} and get
\begin{align*}
\lim_{t\to t_c^+} \dot{\bar{d_h}}(t) - \lim_{t\to t_c^-} \dot{\bar{d_h}}(t)   &=  
\biggl[\nabla_E\tilde d_h \cdot \mathrm{Diff}_h\left( -\frac{\partial \mathscr{H}}{\partial \sigma},-\frac{\partial \mathscr{H}}{\partial g},-\frac{\partial \mathscr{H}}{\partial z},\frac{\partial \mathscr{H}}{\partial G},\frac{\partial \mathscr{H}}{\partial Z} \right)^T\biggr]_{t=t_c} \\
&=\biggl[\nabla_E\tilde d_h \cdot \mathrm{Diff}_h\left( 0,-\frac{\partial \mathscr{H}}{\partial g},-\frac{\partial \mathscr{H}}{\partial z},\frac{\partial \mathscr{H}}{\partial G},\frac{\partial \mathscr{H}}{\partial Z} \right)^T\biggr]_{t=t_c}  \\
& = \biggl[
  -\frac{2\mu_5 \k^2\left[ \sum_{n=1}^{n_{\rm max}}\cos\big(n(\sigma-{\sf h}\cdot V_h)\big)+ \frac{1}{2}\right]}{\pi\sqrt{\det(\mathcal{A}_h)}}\{\tilde{d_h},\tilde d_h  \}\biggr]_{t=t_c} = 0 .
\end{align*}

\end{proof}

\section{Dynamical protection from collisions}
\label{sec:dynpr}

In case of crossings with the resonant planet, the resonance protects
the asteroid from close encounters with that planet (see
\cite{milbac98}). This protection mechanisms is usually derived by a
perturbative approach different from ours. Here we describe how this
mechanism can be recovered from the normal form \eqref{hfin} in the
limit for $n_{\rm max}\to\infty$.

Let us consider, for simplicity, a restricted 3-body problem
Sun-planet-asteroid, where the asteroid is in a mean motion resonance
with the planet, given by
\[
{\sf h} = (h,h')\in\Z^2,
\] 
and their trajectories cross each other during the evolution.
In the following we take a Hamiltonian containing only the direct
part of the perturbation, the indirect part being regular. Therefore we set
\[
{\cal H}= \frac{1}{d},
\]
where $d$ is the distance between the asteroid and the planet.
We consider the following procedures:

{\bf (I)} Through a unimodular transformation $\Psi$ of the fast
variables $V=(\ell,\ell')$ we pass to new variables $(\sigma,\tau)$,
with
\[
\sigma={\sf h}\cdot V,
\]
whose evolution occurs on different time scales: $\sigma$ has a {\em
  long-term} evolution, $\tau$ has a {\em fast} evolution.
More precisely we have
\begin{equation}
V \stackrel{\Psi}{\to}W = \mathcal{U} V
\label{trasfcan}
\end{equation}
where $W=(\sigma,\tau)^T$ and $\mathcal{U}$ is a constant unimodular
matrix whose first raw is $(h, h')$.
The transformation $\Psi$ can be extended to a canonical
transformation (here denoted again by $\Psi$) by defining the
corresponding actions as $(S,T)=\mathcal{U}^{-T}(L,L')$ and leaving
the other variables unchanged.
Then, we average over the fast variable $\tau$ and get the Hamiltonian
\begin{equation}
\ov{\cal K}(\sigma,S,T;X) = \frac{1}{2\pi}\int_0^{2\pi}{\cal
  H}\circ\Psi^{-1}(\sigma,\tau,S,T;X)d\tau.
\label{kappabar}
\end{equation}
Here $X$ is the vector of the other variables, evolving on a secular
time scale. This procedure is used e.g. in \cite{milbac98}.

\medbreak {\bf (II)} As in Section \ref{sec:av_eq}, we consider the
resonant normal form obtained by eliminating all the non resonant
harmonics from the Fourier series of the Hamiltonian.  For each
integer $N$ we take the partial Fourier sums
\[
\mathscr{H}_N(V,L,L';X) = \sum_{\stackrel{|{\sf k}|\leq N}{{\sf
      k}\in{\cal R}}}\hat{\cal H}_{{\sf k}}(L,L';X)e^{i{\sf
    k}\cdot V},
\]
where 
\[
  {\cal R}=\{{\sf k}=(k,k')\in\Z^2: \exists
  n\in\Z \mbox{ with } {\sf k} = n{\sf h}\}
\]
and 
\[
\hat{\cal H}_{{\sf k}}(L,L';X) = \frac{1}{(2\pi)^2}\int_{\mathbb{T}^2}{\cal
  H}({\cal V},L,L';X)e^{-i{\sf k}\cdot {\cal V}}d{\cal V},
\]
in which we denote by ${\cal V}$ the vector $(\ell,\ell')$ when the
latter are integration variables.
We formally define
\[
\mathscr{H}_{\infty}(V,L,L';X) =
\lim_{N\to\infty}\mathscr{H}_N(V,L,L';X). 
\]
Note that
\[
\mathscr{H}_N(V,L,L';X) = \frac{1}{(2\pi)^2}\int_{\T^2}
D_N({\sf h}\cdot {\cal V} - {\sf h}\cdot V){\cal H}({\cal V},L,L';X) d{\cal V},
\]
where $D_N(x)$
is the Dirichlet kernel.
We introduce the functions
\[
\begin{split}
\mathscr{K}_N(\sigma,S,T;X) &=
\mathscr{H}_N\circ\Psi^{-1}(\sigma,\tau,S,T;X),\cr
\mathscr{K}_{\infty}(\sigma,S,T;X) &=
\mathscr{H}_{\infty}\circ\Psi^{-1}(\sigma,\tau,S,T;X).
\end{split}
\]
Indeed both $\mathscr{K}_N$ and $\mathscr{K}_{\infty}$ do not depend
on $\tau$.  The Hamiltonian $\mathscr{K}_N$ corresponds to the
resonant normal form in (\ref{hfin}). However, here we used a
unimodular matrix ${\cal U}$ in the canonical transformation.

Moreover, we observe that the Hamiltonian $\ov{\cal K}$ defined in
\eqref{kappabar} can be written as a pointwise limit for $N\to\infty$
of the partial Fourier sums
\[
\ov{\cal K}_N(\sigma,S,T;X) =
\frac{1}{2\pi}\int_0^{2\pi}D_N(\tilde{\sigma}-\sigma)\ov{\cal
  K}(\tilde{\sigma},S,T;X)d\tilde{\sigma}.
\]
Let $\sigma_c={\sf h}\cdot V_h$. If $d_h=0$ then $\sigma_c$ is the
value of $\sigma$ allowing a collision, occurring for $V=V_h$.
Assume that ${\cal E}_c$ is a non-degenerate crossing configuration,
i.e. $d_h=0$ and ${\cal A}_h$ is positive definite.
We use $Y = Y({\cal E})$ to denote the variables different from
$\sigma$ and we set $Y_c=Y({\cal E}_c)$.

\begin{prop}
The following properties
hold.
\begin{enumerate}

\item
  If ${\cal E}\neq{\cal E}_c$, then for each $\sigma$ we have
  \begin{itemize}
  \item[i)]    $\mathscr{K}_N(\sigma;Y) = \ov{\cal K}_N(\sigma;Y),\quad \forall N$    
  \item[ii)] $\mathscr{K}_\infty(\sigma;Y) = \ov{\cal K}(\sigma;Y) $.
\end{itemize}
   Moreover, these functions are differentiable with
  continuity with respect to $Y$.

 \item For ${\cal E}={\cal E}_c$ we have
\begin{itemize}
\item[i)]   $\mathscr{K}_N(\sigma;Y_c) = \ov{\cal K}_N(\sigma;Y_c),\qquad \forall N, \ \forall \sigma$
\item[ii)]    $\mathscr{K}_\infty(\sigma;Y_c) = \ov{\cal K}(\sigma;Y_c), \quad \qquad \forall \sigma\neq\sigma_c$
\item[iii)]     $\lim_{\sigma\to\sigma_c}\mathscr{K}_{\infty}(\sigma;Y_c) =
     \lim_{\sigma\to\sigma_c}\ov{\cal K}(\sigma;Y_c) = +\infty$.
\end{itemize}
      

\item If ${\cal E}={\cal E}_c$ and $\sigma\neq\sigma_c$ then, denoting
  by $y_j$ a generic component of $Y$,
  \begin{itemize}
    \item[i)] the derivatives $\displaystyle\frac{\partial\mathscr{K}_\infty}{\partial y_j}(\sigma;Y_c) = \displaystyle\frac{\partial\ov{\cal
        K}}{\partial y_j}(\sigma;Y_c)$ exist and are continuous;
   \item[ii)] the derivatives 
    $\displaystyle\frac{\partial\mathscr{K}_N}{\partial y_j}(\sigma;Y_c) = \displaystyle\frac{\partial\ov{\cal K}_N}{\partial y_j}(\sigma;Y_c)$
     generically do not exist.
     
\end{itemize}

\item For each $N$ and for each value of $\sigma$ there exist the limits
  \[
  \lim_{{\cal E}\to{\cal E}_c^\pm}\frac{\partial\mathscr{K}_N}{\partial y_j}(\sigma;Y) \left(= \lim_{{\cal E}\to{\cal E}_c^\pm}\frac{\partial\ov{\cal K}_N}{\partial y_j}(\sigma;Y)\right)
  \]
  from both sides of the crossing configuration set $\Sigma$. These
  limits are generically different and their difference 
  converges in the sense of distributions, for $N\to\infty$, to the
  Dirac delta relative to $\sigma_c$, multiplied by the factor
  \[
  \frac{- 2\mu'{\rm k}^2}{\sqrt{\det({\cal A}_h)}}\frac{\partial
    \tilde{d}_h}{\partial y_i}({\cal E}_c).
  \]
\end{enumerate}
\label{properties}
\end{prop}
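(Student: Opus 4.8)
The plan is to prove the five items of Proposition~\ref{properties} by exploiting the integral representations of $\mathscr{K}_N$ and $\ov{\cal K}_N$ in terms of the Dirichlet kernel, together with the singularity extraction technique already developed for Theorem~\ref{teosalto}. The crucial starting observation is that both Hamiltonians are reductions of the same function ${\cal H}=1/d$, and that they agree harmonic-by-harmonic once the non-resonant Fourier modes are discarded. Concretely, averaging $\ov{\cal K}$ over the fast angle $\tau$ and then taking partial Fourier sums in $\sigma$ commutes with first taking partial Fourier sums over $V$ in the full Hamiltonian and then composing with $\Psi^{-1}$, because $\mathcal{U}$ is unimodular and the map $V\mapsto \sigma={\sf h}\cdot V$ is exactly the first coordinate of $W=\mathcal{U}V$. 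This gives item (1)(i) and (2)(i) by a direct change of variables in the integrals, valid whenever $1/d$ is integrable, i.e. for ${\cal E}$ (possibly equal to ${\cal E}_c$) as long as we do not evaluate at the collision point along the orbit.

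For item (1)(ii) and (2)(ii), I would pass to the limit $N\to\infty$: the Dirichlet kernel $D_N(\tilde\sigma-\sigma)$ acts, against a function which is smooth near $\sigma$, as an approximate identity, so the pointwise limit of $\ov{\cal K}_N$ is $\ov{\cal K}$ wherever $\ov{\cal K}$ is continuous in $\sigma$. Away from ${\cal E}_c$, or at ${\cal E}_c$ but with $\sigma\neq\sigma_c$, the function $\sigma\mapsto\ov{\cal K}(\sigma;Y)$ is smooth (the integrand $1/d$ stays bounded on $\T$, the collision value $V=V_h$ requiring simultaneously ${\cal E}={\cal E}_c$ and $\sigma=\sigma_c$), so classical Dirichlet-kernel convergence applies; this also yields the regularity in $Y$ asserted at the end of item (1) and in item (3)(i), by differentiating under the integral sign, the dominated-convergence hypotheses being guaranteed by the estimates recalled in the Appendix. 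For item (2)(iii), at ${\cal E}={\cal E}_c$ the integrand $1/d$ develops a non-integrable-on-a-neighborhood-of-$V_h$ behaviour as $\sigma\to\sigma_c$; using the approximated distance $\delta_h$ of \eqref{approxdist} one extracts the leading term $\frac{2\pi}{\sqrt{\det{\cal A}_h}}\int \delta_h^{-1}$-type contribution whose value blows up like $-\log|\sigma-\sigma_c|$ (or faster), giving $+\infty$ in the limit for both $\mathscr{K}_\infty$ and $\ov{\cal K}$.

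For item (3)(ii) and, most importantly, item (4), I would reuse the machinery of the proof of Theorem~\ref{teosalto} almost verbatim. Writing $\mathscr{K}_N(\sigma;Y) = \frac{1}{(2\pi)^2}\int_{\T^2} D_N({\sf h}\cdot{\cal V}-\sigma)\,\frac{1}{d({\cal E},{\cal V})}\,d{\cal V}$ and expanding $D_N$ into its cosine sum, each term $\cos(n({\sf h}\cdot{\cal V}-\sigma))\,d^{-1}$ is exactly the type of integral whose $y_j$-derivative was shown in Theorem~\ref{teosalto} to admit two distinct Lipschitz extensions from $\W^\pm$, with jump
\[
4\pi\cos(n({\sf h}\cdot V_h-\sigma))\left[\frac{\partial}{\partial y_j}\Bigl(\frac{1}{\sqrt{\det{\cal A}_h}}\Bigr)\tilde d_h + \frac{1}{\sqrt{\det{\cal A}_h}}\frac{\partial\tilde d_h}{\partial y_j}\right].
\]
Evaluated at ${\cal E}={\cal E}_c$ (where $\tilde d_h=0$) and summed over $1\leq n\leq N$ together with the $n=0$ contribution (which is the averaged term, with an extra factor $1/2$), the jump of $\frac{\partial\mathscr{K}_N}{\partial y_j}$ equals
\[
\frac{-2\mu'{\rm k}^2}{\sqrt{\det{\cal A}_h}}\frac{\partial\tilde d_h}{\partial y_j}({\cal E}_c)\Bigl(\tfrac12 D_N(\sigma-\sigma_c)\Bigr),
\]
after restoring the physical constants. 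Since $\frac12 D_N(\sigma-\sigma_c)\to\pi\,\delta_{\sigma_c}$ in ${\cal D}'(\T)$ as $N\to\infty$, the claimed distributional limit follows; that the limits exist from both sides and are generically distinct is the content of Theorem~\ref{teosalto} itself, while item (3)(ii) is the statement that the two one-sided limits of $\frac{\partial\mathscr{K}_N}{\partial y_j}$ do not coincide for generic ${\cal E}_c$ even at fixed $\sigma\neq\sigma_c$, which is immediate from the non-vanishing of the jump formula. The identity $\frac{\partial\mathscr{K}_N}{\partial y_j}=\frac{\partial\ov{\cal K}_N}{\partial y_j}$ throughout follows from item (1)(i) differentiated in $Y$. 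The main obstacle, as in \cite{gt13}, is the uniform-in-${\cal E}$ control of the second derivatives of $1/d$ needed to upgrade continuity of the extensions to Lipschitz-continuity and to justify differentiation under the integral; but this is exactly what Theorem~\ref{teosalto} already supplies, so here it is a matter of carefully bookkeeping the $n$-dependence of the constants and checking that the sum over $n\leq N$ produces precisely the Dirichlet kernel rather than an uncontrolled series — the truncation at finite $N$ is what makes everything legitimate before the distributional limit is taken.
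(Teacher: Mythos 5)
Your proposal follows essentially the same route as the paper: the identity $\mathscr{K}_N=\ov{\cal K}_N$ via the unimodular change of variables and Fubini--Tonelli (valid also at ${\cal E}_c$ since $1/d\in L^1(\T^2)$), pointwise convergence of the Fourier sums at points where $\ov{\cal K}$ is regular in $\sigma$, the $\delta_h$ singularity-extraction to handle $\sigma\to\sigma_c$ (the paper phrases this through Fatou's lemma at $\sigma=\sigma_c$, you estimate the logarithmic blow-up directly -- both rest on the same decomposition $1/d=(1/d-1/\delta_h)+1/\delta_h$ and the quadratic-form bound on $\delta_h^2$), differentiation under the integral sign for item 3(i), and Theorem~\ref{teosalto}/Corollary~\ref{cor2} plus the Dirichlet kernel for items 3(ii) and 4. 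Two small cautions. First, pointwise convergence of Fourier series does not follow from continuity alone; what you actually use (and what the paper invokes) is convergence at points of differentiability of an $L^1$ function, so state it that way. Second, your constants in item 4 are off by a factor $\pi$: Corollary~\ref{cor2} gives, at ${\cal E}_c$ where $\tilde d_h=0$,
\[
\mathrm{Diff}_h\Bigl(\frac{\partial\mathscr{K}_N}{\partial y_j}\Bigr)
=-\frac{2\mu'\k^2}{\pi}\Bigl[\sum_{n=1}^{N}\cos\bigl(n(\sigma-\sigma_c)\bigr)+\frac12\Bigr]
\frac{1}{\sqrt{\det{\cal A}_h}}\frac{\partial\tilde d_h}{\partial y_j}
=\frac{-2\mu'\k^2}{\sqrt{\det{\cal A}_h}}\frac{\partial\tilde d_h}{\partial y_j}\,\frac{D_N(\sigma-\sigma_c)}{2\pi},
\]
i.e.\ the kernel enters as $\tfrac{1}{2\pi}D_N$, not $\tfrac12 D_N$; with your normalization, and since $\tfrac12 D_N\to\pi\delta_{\sigma_c}$, the limit would be $\pi$ times the factor stated in the Proposition, so the "restoring the physical constants" step must carry the $1/\pi$ coming from the $(2\pi)^{2}$ normalization of the Fourier coefficients and the $4\pi$ jumps of Theorem~\ref{teosalto}.
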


\begin{remark}
  If ${\cal E}={\cal E}_c$, procedure {\bf (I)} gives a well defined
  vector field, provided that $\sigma\neq\sigma_c$. On the other hand,
  with procedure {\bf (II)} it does not make sense to consider
  \[
  \lim_{N\to\infty} \frac{\partial
    \mathscr{K}_N}{\partial y_j}(\sigma;Y_c).
  \]
However, for each $N$ we can extend the vector field of
$\mathscr{K}_N$ in two different ways on $\Sigma$, and the difference
between the two extensions has a very weak behavior for $N\to\infty$:
it tends to a Dirac delta in the sense of distribution, being the
singularity of the delta just at $\sigma=\sigma_c$.
\end{remark}


\noindent{\em Proof of Proposition~\ref{properties}}.
  
  1. For every $N$,
by applying the change of variables $V\to
\Psi(V)$ and Fubini-Tonelli's theorem we obtain
\begin{align}
  \begin{split}\label{K_n}
  \mathscr{K}_N(\sigma;Y)&=\mathscr{H}_N\circ\Psi^{-1}(\sigma,\tau;Y)
  =\frac{1}{(2\pi)^2}\int_{\T^2}
D_N({\sf h}\cdot {\cal V} - \sigma){\cal H}({\cal V};Y) d{\cal
  V}\\
&=\frac{1}{(2\pi)^2}\int_{\T^2} D_N(\tilde\sigma - \sigma){\cal
  H}\circ \Psi^{-1}(\tilde{\sigma},\tilde{\tau};Y)
d\tilde{\sigma}d\tilde{\tau}\\
&=\frac{1}{2\pi}\int_{\T} D_N(\tilde\sigma - \sigma)\left(
\frac{1}{2\pi}\int_{\T}{\cal H}\circ
\Psi^{-1}(\tilde{\sigma},\tilde{\tau};Y) d\tilde{\tau}
\right)d\tilde{\sigma}\\
&= \frac{1}{2\pi}\int_{\T} D_N(\tilde\sigma - \sigma)\ov{\cal
  K}(\tilde\sigma;Y)d\tilde\sigma = \ov{\cal K}_N(\sigma;Y),
\end{split}
\end{align}
that proves i). Point ii) comes from the fact
that, for ${\cal E}\neq {\cal E}_c$, $\ov{\cal K}(\sigma;Y)$ is a
smooth function of $\sigma$ and the corresponding Fourier series
converge pointwise for every $\sigma$. Hence we can pass to the limit
as $N\to\infty$ in the previous equality.

\noindent The differentiability comes from the fact that the distance
function ${\cal H} = 1/d$ is bounded for ${\cal E}\neq {\cal
  E}_c$.

\bigbreak 2.  To prove i), we can repeat the argument used in
(\ref{K_n}). Indeed, the double integral is finite also for ${\cal
  E}={\cal E}_c$ and we can apply Fubini-Tonelli's theorem.

\noindent To prove ii), we recall that the Fourier series of an $L^1$
function
converges pointwise at every point of differentiability
\cite{stein}. Therefore, for every $\sigma\neq\sigma_c$, $\ov{\cal
  K}_N(\sigma;Y_c)\to\ov{\cal K}(\sigma;Y_c)$ for $N\to\infty$. Hence,
using i) and passing to the limit for $N\to\infty$ in $\mathscr{K}_N$
we get the result.

\noindent To prove iii) we just need to prove that one of the two limits 
diverges. From Fatou's lemma
\begin{align}
\liminf_{\sigma\to\sigma_c}\ov{\cal K}(\sigma;Y_c) &\geq
\frac{1}{2\pi}\int_0^{2\pi}{\cal
  H}\circ\Psi^{-1}(\sigma_c,\tau;Y_c)d\tau
\nonumber\\
&=\frac{1}{2\pi}\int_0^{2\pi}\frac{1}{
  d\circ\Psi^{-1}(\sigma_c,\tau;Y_c)}d\tau = +\infty.
\label{diverges}
\end{align}
We can prove that the integral in \eqref{diverges}
diverges by a singularity extraction technique. Let us write
\begin{equation}
\frac{1}{d} = \Bigl(\frac{1}{d} -\frac{1}{\delta_h}\Bigr) +\frac{1}{\delta_h}.
\label{singextr}
\end{equation}
The first term in the r.h.s. of \eqref{singextr} is bounded, while the
integral of the second diverges because
\begin{equation}
\delta_h^2({\cal U}^{-1}Z + V_h;Y_c) = Z\cdot{\cal B}_h Z \geq
\frac{\det{\cal A}_h}{b_{22}}(\sigma-\sigma_c)^2,
\label{estim}
\end{equation}
where 
\[
Z = {\cal U}(V-V_h),
\] 
with ${\cal U}$ the unimodular matrix
defined in \eqref{trasfcan}, and
\[
{\cal B}_h = U^{-T}{\cal A}_h U^{-1}.
\]
The number $b_{22}$ in \eqref{estim} is defined by
\[
b_{22}=e_2\cdot{\cal B}_he_2  
\]
and is strictly positive because ${\cal B}_h$ is positive definite, being ${\cal
  E}_c$ non-degenerate (and therefore ${\cal A}_h$ positive definite).

\bigbreak 3. Estimate \eqref{estim}, decomposition \eqref{singextr},
and the theorem of differentiation under the integral sign yield the
existence and continuity of the derivatives
$\displaystyle\frac{\partial\ov{\cal K}}{\partial y_j}$, that is i).
Point ii) is a consequence of property 4.

\bigbreak
4. This follows from Theorem~\ref{teosalto} and Corollary~\ref{cor2}.


\hfill{$\square$}

\section{Numerical experiments}\label{sec:numerics}

We compare the long term evolution coming from system (\ref{hamres})
with the full evolution of equation (\ref{Nbodyeq}),
corresponding to the classical restricted $N$-body problem.

To get the evolution of the planets, we compute a planetary
ephemerides database for a time span of 2000 yrs, starting at 57600
MJD with a time step of 0.5 years. The computation is performed using
the FORTRAN program {\tt orbit9} included in the {\tt OrbFit} free
software\footnote{{\tt http://adams.dm.unipi.it/orbmaint/orbfit}}. The
planetary evolution at the desired time is obtained from this
database by linear interpolation.

Inspired by the classification in \cite{milani89} we consider two
paradigmatic cases, representing the two crossing behaviors discussed
in the previous sections. The first case is asteroid (887) Alinda,
that is considered in the gravitational field of 5 planets, from Venus
to Saturn. This asteroid is in $3:1$ mean motion resonance with
Jupiter and we will consider its crossings with the orbit of Mars. The
second case deals with the 'Toro' class: we consider a fictitious
asteroid that we call 1685a under the influence of 3 planets: the
Earth, Mars and Jupiter. This asteroid crosses the orbit of the Earth,
and is in the $5:8$ mean motion resonance with it.

We use the same algorithm as in \cite{gt13} to compute the solution of
system (\ref{hamres}). This is a Runge-Kutta-Gauss method evaluating
the vector field at intermediate points of the time step. The time
step is reduced when the trajectory of the asteroid is close to a
planet crossing, in order to get exactly the crossing condition.  By
Theorems \ref{gttheorem}-\ref{teosalto} we can find two
locally Lipschitz-continuous extensions of the vector field from both sides of
the singular set $\Sigma$. The difference between the two extended
fields is given by Corollary \ref{cor1} for asteroid 887 (Alinda) and
by Corollary \ref{cor2} for asteroid 1685a. In both cases, we compute
the intermediate values of the extended vector field just after the
crossing, and then we correct them using Corollary \ref{cor1} or
Corollary \ref{cor2}. We use these corrected values as an
approximation of the vector field at the intermediate point
of the solution, see Figure \ref{fig:RKG}. This algorithm avoids the
computation of the vector field at the singular points, which could be
affected by numerical instability.

\begin{figure}[ht!]
  \psfragscanon \psfrag{A}{$\Y_{k-1}$} \psfrag{B}{$\Y_k$}
  \psfrag{C}{$\Y_{k+1}$} \psfrag{Wp}{${\cal W}^+$}\psfrag{Wm}{${\cal
      W}^-$} \psfrag{sigma}{$\Sigma$}
  \centerline{\epsfig{figure=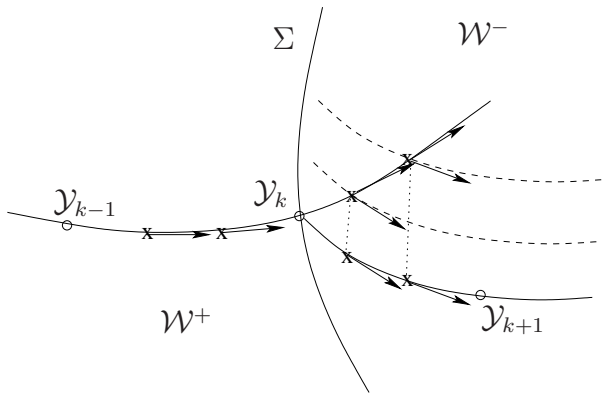,width=8cm}}
  \psfragscanoff
\caption{Runge-Kutta-Gauss method and continuation of the solution of
  (\ref{hamres}) beyond the singularity.}
\label{fig:RKG}
\end{figure}

To produce the comparison, we consider 64 possible initial conditions
for system (\ref{Nbodyeq}) corresponding to the same initial
condition of system (\ref{hamres}). For asteroid 887 (Alinda)
these are produced by shifting the mean anomalies in the following
way.  Let $\bar{\ell}_j$ and $\bar{\ell}$ be the mean anomalies of
planet $j$ and the asteroid, at the initial epoch 57600
MJD. For each planet, we consider the 64 values
$\ell^{(k)}_j=\bar{\ell}_j+k\pi/64$ with $k=0,\dots,63$. For every
$k$, we compute the initial value of the mean anomaly
$\ell^{(k)} =\bar{\ell} + l^{(k)}$ 
of the asteroid such that
\[
h^*_5 (\bar{\ell}_{5}+k\pi/64) + h^*(\bar{\ell} + l^{(k)})=
h^*_5\bar{\ell}_5+h^* \bar{\ell}.
\]  
The integration of this 64 different initial conditions is performed
with the program {\tt orbit9}. Then we consider the arithmetic mean of
the 5 Keplerian elements $a,e,I,\Omega,\omega$ and the critical angle
$\sigma=h^*_5 \ell_5+h^*\ell$ over these evolutions and compare them
with the corresponding elements coming from system (\ref{hamres}), in
which we choose $n_{\rm max} = 3$. Figure \ref{fig:887} summarizes the
results: the solid line corresponds to the solution of (\ref{hamres})
while the dashed line corresponds to the arithmetic mean of the full
numerical integrations. The shaded region represents the standard
deviation from the arithmetic mean. The correspondence between the
solutions is good. The Mars crossing singularity occurs around
$t=3786\: yr$.

\begin{figure}[h!]
\centerline{\epsfig{figure=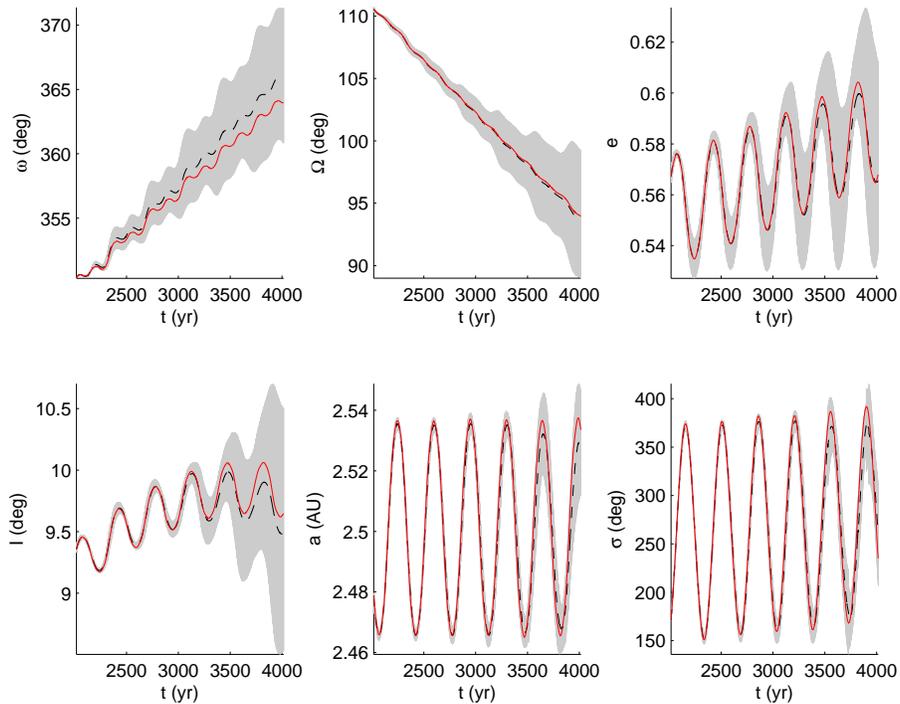,width=12cm}}
\caption{Asteroid 887 (Alinda): comparison between the long term
  evolution using $n_{\rm max} = 3$ (solid line) and the arithmetic mean of
  64 full numerical integrations (dashed line).}
\label{fig:887}
\end{figure}

For asteroid 1685a we proceed in the same way, with the Earth playing
the role of Jupiter. For the long term evolution we used $n_{\rm
  max}=3, 15$. In Figure \ref{fig:1685a} we show the results. Using
$n_{\rm max}=15$ we see that the result improves very much.  The Earth
crossing singularity occurs around $t=2281\: yr$.  In this test the
value of $\sigma_c$ at crossing results to be about $348$ degrees,
which is quite different from all the values of $\sigma$ in
Figure~\ref{fig:1685a}.
We cannot really appreciate the effect of the singularity in the
evolution since we obtain very small values of the components
$\mathrm{Diff}_h(\frac{\partial\mathscr{H}}{\partial y_i})$. 

\begin{figure}[h]
  \centerline{\epsfig{figure=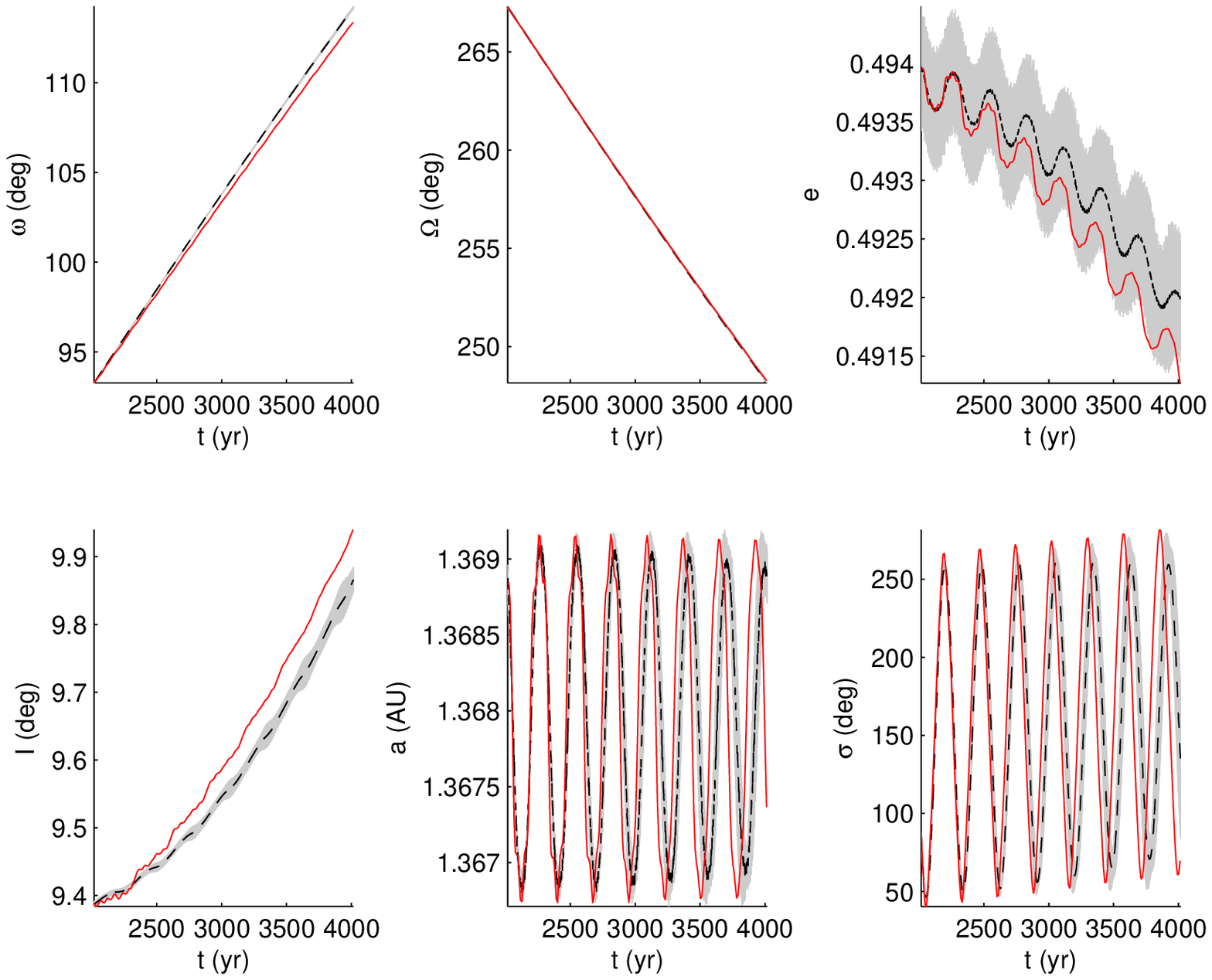,width=12cm}}
  \vskip 1cm
\centerline{\epsfig{figure=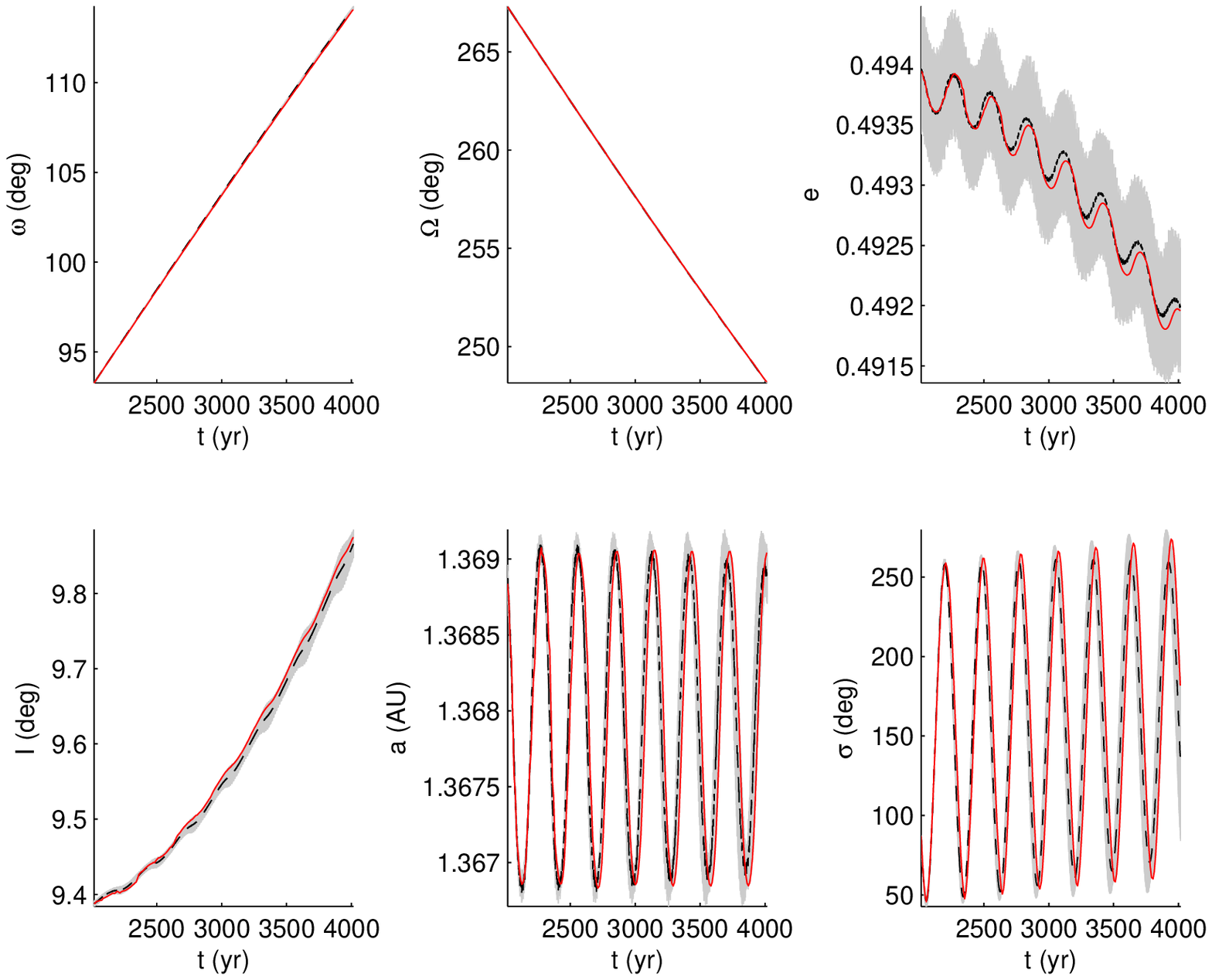,width=12cm}
  }
\caption{Asteroid 1685a: comparison between the long term evolution (solid line)
  and the arithmetic mean of 64 full numerical integrations (dashed line). Above $n_{\rm max}=3$. Below $n_{\rm max}=15$.}
\label{fig:1685a}
\end{figure}

\section{Conclusions}
We studied the long term dynamics of an asteroid under the
gravitational influence of the Sun and the solar system planets,
assuming that a mean motion resonance between the asteroid and one of
the planets occurs. We focused on the case of planet crossing
asteroids and considered a resonant normal form
$\mathscr{H}_{n_{\rm max}}$, see \eqref{normalform},\eqref{hfin}. The
analysis is performed separately for crossings with the resonant
planet or with another one. In both cases, we could define generalized
solutions of the differential equations for the long term dynamics,
going beyond the singularity. These solutions are continuous but in
general not differentiable. We also proved that generically, in a
neighborhood of a crossing time, the evolution of the signed orbit
distance along the generalized solutions is more regular that the long
term evolution of the orbital elements.
In case of crossings with the resonant planet, we recovered the
protection mechanism against collisions in the limit $n_{\rm max}\to
\infty$. This implies that, 
 if the resonant angle $\sigma$ is different from the
critical value $\sigma_c$ at the crossing times $t_c$ (see Sections~\ref{sec:gen},\ref{sec:dynpr}) also deep close
encounters are avoided, which makes the results of this theory more
reliable.  Indeed, close encounters can still occur with a planet not
involved in the resonance, and this represent a critical case.  Actually, in
this case the semimajor axis usually suffer a drastic change
\cite{opik}, pushing the asteroid outside the considered resonance.
By means of numerical experiments, in some relevant cases, we showed
that the model seems to approximate well the full evolution in a
statistical sense. We plan to make numerical tests
on a large scale, to study different dynamical behaviors of the
population of NEAs.

This work extends the results in \cite{gt13} to the resonant case and
gives a unified view of the orbit crossing singularity in case of mean
motions resonances with one planet: indeed, comparing the results in
Corollaries~\ref{cor1},\ref{cor2} we see how the discontinuity in the
derivatives, represented by
$\mathrm{Diff}_h\frac{\partial\mathscr{H}}{\partial y_i}$, vanishes in
a weak sense (i.e. in the sense of distributions) for $n_{\rm
  max}\to\infty$, if $\sigma\neq\sigma_c$.
Moreover, the resonant normal form introduced in \eqref{hfin} can
easily be extended to include more than one resonance, also with
different planets, by considering all the harmonics associated to the
corresponding resonant module (see \cite[Chap.2]{morby}).

\appendix

\section{Appendix}\label{appendix}
From the definition of the approximate distance $\delta_h$, we have that
\[
d^2(\E,V) = d^2_h(\E) + (V-V_h)\cdot\A_h (V-V_h) +{\cal R}_3^{(h)}(\E,V) =
\delta_h^2(\E) +{\cal R}_3^{(h)}(\E,V).
\]
We summarize below some relevant estimates and results from \cite{gt13}.
In the following, we shall denote by $c_i$, $i=1,\dots,14$, some
positive constants independent on $\E$. We first recall some Lemmas.
\begin{lemma}
  There exist positive constants $c_1$, $c_2$ and a neighborhood ${\cal U}$ of
  $(\E_c,V_h(\E_c))$ such that
  \begin{equation*}
c_1 \delta_h^2 \leq d^2 \leq c_2 \delta_h^2
\end{equation*}
holds for $(\cal E, V)$ in ${\cal U}$. Moreover, there exist positive
constants $c_3$, $c_4$ and a neighborhood ${\cal W}$ of $\E_c$ such that
\begin{equation}
d_h^2 + c_3\vert V-V_h\vert^2 \leq \delta_h^2 \leq
d_h^2 + c_4\vert V-V_h\vert^2
\label{aii}
\end{equation}
holds for $\E$ in ${\cal W}$ and for every $V\in\mathbb{T}^2$.
\label{inequalities}
\end{lemma}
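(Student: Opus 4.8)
The plan is to split the statement into its two parts: the inequality \eqref{aii} is a purely algebraic consequence of the positive definiteness of $\A_h$, and the comparison $c_1\delta_h^2 \le d^2 \le c_2\delta_h^2$ then follows by absorbing the Taylor remainder $\mathcal{R}_3^{(h)}$ into $\delta_h^2$.

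First I would record how the minimum point $V_h$ and the matrix $\A_h$ depend on $\E$. Since $\E_c$ is a non-degenerate configuration, $V_h(\E_c)$ is a non-degenerate critical point of the smooth map $V \mapsto d^2(\E_c,V)$; the implicit function theorem applied to $\nabla_V d^2(\E,V)=0$ then shows that, for $\E$ in a suitable neighborhood of $\E_c$, the local minimum point $V_h(\E)$ is well defined and depends smoothly on $\E$, and hence so does $\A_h(\E)$, which is one half of the Hessian of $V\mapsto d^2(\E,V)$ evaluated at $V_h(\E)$. As $\A_h(\E_c)$ is positive definite, by continuity I can choose a relatively compact neighborhood $\W$ of $\E_c$ on which the eigenvalues of $\A_h(\E)$ lie in a fixed interval $[c_3,c_4]\subset(0,+\infty)$. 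Then for every $\E\in\W$ and every $V\in\mathbb{T}^2$ we have $c_3|V-V_h|^2 \le (V-V_h)\cdot\A_h(V-V_h) \le c_4|V-V_h|^2$; adding $d_h^2$ and using the definition $\delta_h^2 = d_h^2 + (V-V_h)\cdot\A_h(V-V_h)$ yields \eqref{aii}.

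For the comparison between $d^2$ and $\delta_h^2$ I would use $d^2 = \delta_h^2 + \mathcal{R}_3^{(h)}$ together with the key point that $d^2=|\X-\X'|^2$ --- unlike $1/d$ --- is a smooth function of all its arguments, including at crossings. Writing $\mathcal{R}_3^{(h)}$ in integral form and bounding the third-order derivatives of $V\mapsto d^2(\E,V)$ on a compact neighborhood of $(\E_c,V_h(\E_c))$, I obtain a constant $C>0$ with $|\mathcal{R}_3^{(h)}(\E,V)| \le C|V-V_h|^3$ there. Restricting to the subset $\U$ where in addition $|V-V_h|\le r_0$ with $C r_0 \le c_3/2$, the lower bound from the previous step gives $|\mathcal{R}_3^{(h)}| \le C r_0 |V-V_h|^2 \le \tfrac12(V-V_h)\cdot\A_h(V-V_h) \le \tfrac12\delta_h^2$, whence $\tfrac12\delta_h^2 \le d^2 \le \tfrac32\delta_h^2$ on $\U$, i.e.\ $c_1=\tfrac12$, $c_2=\tfrac32$ work (the degenerate case $\delta_h=0$, which forces $V=V_h$ and $d_h=0$, hence $d=0$, being trivial).

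The only real point requiring care is uniformity in $\E$: the constants $C$, $c_3$, $c_4$ and the radius $r_0$ must be independent of $\E$ in the chosen neighborhood. This is exactly what the non-degeneracy of $\E_c$ buys us --- it makes $V_h(\E)$ and $\A_h(\E)$ depend smoothly on $\E$ near $\E_c$ --- combined with compactness (passing to a closed sub-neighborhood, and using that $\mathbb{T}^2$ is compact). Beyond this there is no analytic difficulty, since the squared distance is genuinely $C^\infty$ even along crossing configurations; this is precisely the reason the expansion is carried out for $d^2$ rather than for $d$ or $1/d$. One can alternatively simply invoke \cite{gt13}, where this is proved in the same setting.
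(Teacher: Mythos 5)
Your proof is correct. The paper itself does not prove this lemma: it is stated in the Appendix as one of the estimates imported from \cite{gt13}, so there is no internal argument to compare against. Your direct argument — smoothness of $d^2$ (in contrast to $1/d$) so that non-degeneracy of $\E_c$ plus the implicit function theorem give smooth dependence of $V_h$ and $\A_h$ on $\E$, uniform eigenvalue bounds on a compact neighborhood yielding \eqref{aii}, and then absorption of the integral-form remainder ${\cal R}_3^{(h)}=O(|V-V_h|^3)$ into the quadratic part to get $\tfrac12\delta_h^2\le d^2\le\tfrac32\delta_h^2$ near $(\E_c,V_h(\E_c))$ — is exactly the natural route and supplies the proof the paper delegates to the reference, with the uniformity in $\E$ handled correctly by compactness.
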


\begin{lemma}
Using the coordinate change $\xi = {\cal A}_h^{1/2}(V-V_h)$ and then
polar coordinates $(\rho,\theta)$, defined by
$(\rho\cos\theta,\rho\sin\theta)=\xi$, we have
\begin{equation}
\int_{{\cal D}}\frac{1}{\delta_h}\,d\ell d\ell' = 
  \frac{1}{\sqrt{\det{\cal A}_h}}\int_{{\cal B}} 
  \frac{1}{\sqrt{d_h^2 + |\xi|^2}}\,d\xi
%
= \frac{2\pi}{\sqrt{\det{\cal A}_h}}(\sqrt{d_h^2+r^2} - d_h) \,,
\label{aiii}
\end{equation}
with ${\cal B} = \{\xi\in\mathbb{R}^2:|\xi|\leq r\}$.  The term $-2\pi
d_h/\sqrt{\det{\cal A}_h}$ is not differentiable at ${\cal E} =
{\cal E}_c\in\Sigma$.
\end{lemma}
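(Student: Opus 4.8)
The plan is to compute the integral directly via the indicated change of variables and then read off the regularity of the resulting elementary expression.

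First I would note that for $r$ small enough the set $\mathcal{D}=\{V\in\mathbb{T}^2:(V-V_h)\cdot\mathcal{A}_h(V-V_h)\le r^2\}$ is a small ellipse centered at $V_h$, hence contained in a single coordinate patch of $\mathbb{T}^2$; we may therefore regard $V$ as a variable in $\mathbb{R}^2$ and $\mathcal{D}$ as a bounded Euclidean region. Since $\mathcal{E}_c$ is non-degenerate, the matrix $\mathcal{A}_h=\mathcal{A}_h(\mathcal{E})$ is positive definite throughout a neighborhood $\mathcal{W}$ of $\mathcal{E}_c$ (it is positive definite at $\mathcal{E}_c$, excluding tangent crossings, and this persists by continuity), so $\mathcal{A}_h^{1/2}$ is well defined, symmetric and invertible. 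The linear substitution $\xi=\mathcal{A}_h^{1/2}(V-V_h)$ has constant Jacobian determinant $\det\mathcal{A}_h^{1/2}=\sqrt{\det\mathcal{A}_h}$, maps $\mathcal{D}$ bijectively onto $\mathcal{B}=\{|\xi|\le r\}$, and turns the quadratic form into $(V-V_h)\cdot\mathcal{A}_h(V-V_h)=|\xi|^2$, so that $\delta_h^2=d_h^2+|\xi|^2$. This gives the first equality in \eqref{aiii}. Then I would pass to polar coordinates $(\rho,\theta)$ with $\xi=(\rho\cos\theta,\rho\sin\theta)$ and compute
\[
\int_{\mathcal{B}}\frac{d\xi}{\sqrt{d_h^2+|\xi|^2}}=2\pi\int_0^r\frac{\rho\,d\rho}{\sqrt{d_h^2+\rho^2}}=2\pi\Bigl[\sqrt{d_h^2+\rho^2}\Bigr]_0^r=2\pi\bigl(\sqrt{d_h^2+r^2}-d_h\bigr),
\]
which is the second equality.

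For the regularity claim I would argue as follows. By the implicit function theorem applied to $\nabla_V d^2(\mathcal{E},V)=0$ — whose $V$-Hessian is invertible, $\mathcal{E}_c$ being non-degenerate — the minimum point $V_h=V_h(\mathcal{E})$ depends smoothly on $\mathcal{E}$ on $\mathcal{W}$; hence $d_h^2(\mathcal{E})=|\X'(E',v'_h)-\X(E,v_h)|^2$ is a smooth function of $\mathcal{E}$, and so is $\det\mathcal{A}_h(\mathcal{E})$, which moreover stays bounded away from $0$ on $\mathcal{W}$. Therefore $1/\sqrt{\det\mathcal{A}_h}$ and $\sqrt{d_h^2+r^2}$ (whose argument remains $\ge r^2>0$) are smooth on $\mathcal{W}$, and the only potentially singular contribution is $-2\pi d_h/\sqrt{\det\mathcal{A}_h}$. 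Writing $d_h=|\tilde d_h|$ with $\tilde d_h$ analytic and $\nabla_{\mathcal{E}}\tilde d_h\neq 0$ across $\Sigma=\{\tilde d_h=0\}$ (the smoothing property recalled in Section~\ref{sec:distance}), the map $\mathcal{E}\mapsto d_h(\mathcal{E})$ has a genuine corner along $\Sigma$, so $-2\pi d_h/\sqrt{\det\mathcal{A}_h}$ — and hence the whole right-hand side of \eqref{aiii} — fails to be differentiable at $\mathcal{E}_c\in\Sigma$.

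None of the steps is deep: the computation is elementary. The only points requiring some care are the reduction of $\mathcal{D}\subset\mathbb{T}^2$ to a Euclidean disk, valid only for $r$ sufficiently small; the smoothness of $V_h$ together with the uniform positivity of $\det\mathcal{A}_h$, both consequences of non-degeneracy; and the fact that $\tilde d_h$ changes sign transversally across $\Sigma$, which is exactly what makes $d_h$ — and therefore the last term — non-smooth. This is the only place where more than a direct calculation is needed.
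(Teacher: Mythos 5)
Your computation is correct and is exactly the argument indicated by the lemma itself (the paper simply recalls this result from Gronchi--Tardioli 2013 without reproving it): the linear substitution with Jacobian $\sqrt{\det\mathcal{A}_h}$, the polar integration, and the observation that the only non-smooth term is $-2\pi d_h/\sqrt{\det\mathcal{A}_h}$ because $d_h=|\tilde d_h|$ has a corner where $\tilde d_h$ vanishes transversally. Your added care about $\mathcal{D}$ fitting in one chart, the smoothness of $V_h$ via the implicit function theorem, and the positive definiteness of $\mathcal{A}_h$ near a non-degenerate (non-tangent) crossing is consistent with the paper's setting and does not change the route.
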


\begin{lemma}
The maps
\begin{equation*}
\W^+\ni\E\mapsto \frac{\partial}{\partial  y_i}\int_{\T^2}\frac{1}{\delta_h(\E ,V )} dV, \quad \W^-\ni\E\mapsto \frac{\partial}{\partial  y_i}\int_{\T^2}\frac{1}{\delta_h(\E ,V )} dV 
\end{equation*}
can be extended to two different analytic maps $\G_h^+$, $\G_h^-$ such that, in $\W$,
 \[
\G_h^--\G_h^+ = 4\pi\biggl[\frac{\partial}{\partial
    y_i}\biggl(\frac{1}{\sqrt{\det(\mathcal{A}_h)}}\biggr)\tilde{d}_h+\frac{1}{\sqrt{\det(\mathcal{A}_h)}}\frac{\partial
    \tilde{d}_h}{\partial y_i}\biggr].
\]
\label{lem3}
\end{lemma}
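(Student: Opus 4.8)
The idea is to separate the part of the integral responsible for the singularity, which is concentrated near the minimum point $V_h$, from a remainder that stays real-analytic in $\E$ across $\Sigma$.

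First I would split
\[
\int_{\T^2}\frac{1}{\delta_h(\E,V)}\,dV = \int_{\D}\frac{1}{\delta_h(\E,V)}\,dV + \int_{\T^2\setminus\D}\frac{1}{\delta_h(\E,V)}\,dV ,
\]
where $\D=\{V\in\T^2:(V-V_h)\cdot\A_h(V-V_h)\le r^2\}$ is the region used in the proof of Theorem~\ref{teosalto}. On $\T^2\setminus\D$ we have $(V-V_h)\cdot\A_h(V-V_h)\ge r^2$, hence $\delta_h^2\ge d_h^2+r^2\ge r^2$, so $1/\delta_h$ is bounded away from its singularity there. Moreover, after shrinking $\W$ so that all critical points of $V\mapsto d(\E,V)$ remain non-degenerate (Section~\ref{sec:distance}), the data $V_h$, $\A_h$, $d_h$ depend real-analytically on $\E$ on $\W$, and $\A_h$ stays positive definite. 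Differentiating under the integral sign — after a change of variable normalising the $\E$-dependence of the domain $\D$ — shows that $\E\mapsto\int_{\T^2\setminus\D}\delta_h^{-1}\,dV$ extends real-analytically to all of $\W$; its $y_i$-derivative contributes the same quantity to the two one-sided extensions and therefore cancels in their difference.

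For the remaining piece I would use the closed evaluation \eqref{aiii},
\[
\int_{\D}\frac{1}{\delta_h}\,dV = \frac{2\pi}{\sqrt{\det\A_h}}\Bigl(\sqrt{d_h^2+r^2}-d_h\Bigr) .
\]
Here $1/\sqrt{\det\A_h}$ is analytic on $\W$, and so is $\sqrt{d_h^2+r^2}$, since $d_h^2=\tilde{d}_h^2$ with $\tilde{d}_h$ analytic near $\E_c$ (the smoothing procedure of Section~\ref{sec:distance}, which applies because the non-degenerate crossing is not tangent) and $r>0$. The only non-smooth term is $d_h=|\tilde{d}_h|$: on $\W^+$ one has $\tilde{d}_h>0$, so $d_h=\tilde{d}_h$ and the right-hand side equals $\frac{2\pi}{\sqrt{\det\A_h}}(\sqrt{\tilde{d}_h^2+r^2}-\tilde{d}_h)$, an analytic function on all of $\W$; on $\W^-$ one has $d_h=-\tilde{d}_h$ and the expression becomes $\frac{2\pi}{\sqrt{\det\A_h}}(\sqrt{\tilde{d}_h^2+r^2}+\tilde{d}_h)$, again analytic. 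Taking $y_i$-derivatives of these two analytic functions, and adding back the analytic contribution of the $\T^2\setminus\D$ part, produces the two analytic maps $\G_h^+$ and $\G_h^-$ of the statement, each coinciding with $\partial_{y_i}\int_{\T^2}\delta_h^{-1}\,dV$ on its own side of $\Sigma$.

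It then only remains to subtract: the $\sqrt{\tilde{d}_h^2+r^2}$ terms and the far-field contribution are common to $\G_h^+$ and $\G_h^-$ and disappear, leaving
\[
\G_h^- - \G_h^+ = \frac{\partial}{\partial y_i}\Bigl(\frac{4\pi\,\tilde{d}_h}{\sqrt{\det\A_h}}\Bigr)
= 4\pi\Bigl[\frac{\partial}{\partial y_i}\Bigl(\frac{1}{\sqrt{\det\A_h}}\Bigr)\tilde{d}_h + \frac{1}{\sqrt{\det\A_h}}\frac{\partial\tilde{d}_h}{\partial y_i}\Bigr] ,
\]
as claimed. The step I expect to require the most care is the real-analyticity, and the uniform differentiation under the integral sign up to $\Sigma$, of the $\T^2\setminus\D$ part, which rests on the joint analyticity of $\delta_h$ in $(\E,V)$ away from $V_h$ together with the lower bound $\delta_h\ge r$ there; the rest is bookkeeping on top of the identity \eqref{aiii} and the smoothing property of $\tilde{d}_h$. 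This is precisely the computation carried out in \cite{gt13}.
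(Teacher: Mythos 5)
Your proposal is correct and takes essentially the same route as the paper's source: the paper itself only quotes this lemma from \cite{gt13}, and the argument there (implicitly re-used in the proof of Theorem~\ref{teosalto}, where the extensions $\frac{2\pi}{\sqrt{\det\A_h}}\bigl(\sqrt{d_h^2+r^2}\mp\tilde{d}_h\bigr)$ appear explicitly) is exactly your computation: the closed form \eqref{aiii} on $\D$, the substitution $d_h=\pm\tilde{d}_h$ on $\W^\pm$ using the analyticity of $\tilde d_h$, $V_h$, $\A_h$ at a non-degenerate (hence non-tangent) crossing, and the smooth $\T^2\setminus\D$ remainder cancelling in the difference. Nothing is missing; your closing jump formula coincides with the stated one.
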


Moreover the following estimates hold, with $\mathcal{U}_\Sigma =
\{(\E,V_h(\E)) : \E\in\Sigma\}$:
\begin{itemize}
\item[]
\begin{equation}
\int_{\D}\frac{\partial^2 }{\partial y_i\partial y_j}\frac{1}{d(\E,V)}dV\leq c_5 \quad\mbox{for } {\cal E} \mbox{ in } {\cal W},
\label{aiv}
\end{equation}
\item[]
 \begin{equation}
\left|\frac{\partial d^2}{\partial y_i}\right| , \left|\frac{\partial \delta_h^2}{\partial y_j}\right| \leq c_6(d_h+|V-V_h|)\quad \mbox{in }{\cal U}\setminus{\cal U}_\Sigma,
\label{av}
\end{equation}
\item[]
\begin{equation}
  \int_{\cal D}\frac{dV}{d_h +|V-V_h|}\leq c_7 \quad\mbox{for } {\cal E} \mbox{ in } {\cal W},
  \label{avi}
\end{equation}
\item[]
\begin{equation}
\left|\frac{\partial^2 \delta_h^2}{\partial y_i\partial y_j} \right|,\left|\frac{\partial^2 d^2}{\partial y_i\partial y_j} \right|\leq c_8 \quad\mbox{for } {\cal E} \mbox{ in } {\cal W},
\label{avii}
\end{equation}
\item[]
\begin{equation}
\left| \frac{1}{d^3}-\frac{1}{\delta_h^3} \right|\leq \frac{c_{9}}{d_h^2+|V-V_h|^2} \quad \mbox{in }{\cal U}\setminus{\cal U}_\Sigma,
\label{aviii}
\end{equation}
\item[]
\begin{equation}
 \frac{\partial{\cal R}_3^{(h)}}{\partial y_i}\leq c_{10}|V-V_h|^2
\quad \mbox{in }{\cal U}\setminus{\cal U}_\Sigma,
\label{aix}
\end{equation}
\item[]
\begin{equation}
\frac{\partial^2{\cal R}_3^{(h)}}{\partial y_i\partial y_j}\leq c_{11}|V-V_h|
\quad \mbox{in }{\cal U}\setminus{\cal U}_\Sigma,
\label{ax}
\end{equation}
\item[]
 \begin{equation}
\left| \frac{1}{d}-\frac{1}{\delta_h} \right|\leq {c_{12}} \quad \mbox{in }{\cal U}\setminus{\cal U}_\Sigma,
\label{axi}
\end{equation}
\item[]
 \begin{equation}
\left| \frac{\partial}{\partial y_i}\left(\frac{1}{d}-\frac{1}{\delta_h}\right) \right|\leq \frac{c_{13}}{d_h +|V-V_h|}
 \quad \mbox{in }{\cal U}\setminus{\cal U}_\Sigma,
\label{axii}
\end{equation}
\item[]
\begin{equation}
\frac{\partial V_h}{\partial  y_i}\leq c_{14} \quad\mbox{for } {\cal E} \mbox{ in } {\cal W}.
\label{axiii}
\end{equation}
\end{itemize}



\noindent {\bf Acknowledgments}.  We are grateful to Alessandro
Morbidelli, whose comments induced us to investigate better the
relation between this work and the other results present in the
literature, as explained in Section~\ref{sec:dynpr}.  The authors
acknowledge the support by the Marie Curie Initial Training Network
Stardust, FP7-PEOPLE-2012-ITN, Grant Agreement 317185. S.M. also
acknowledges financial support from the Spanish Ministry of Economy
and Competitiveness, through the ''Severo Ochoa'' Programme for
Centres of Excellence in R\&D (SEV-2015-0554), the project "Geometric
and numerical analysis of dynamical systems and applications to
mathematical physics" (MTM2016-76072-P), and the "Juan de la
Cierva-Formación" Programme (FJCI-2015-24917). G.F.G. has been
partially supported by the University of Pisa via grant PRA-2017
`Sistemi dinamici in analisi, geometria, logica e meccanica celeste'.


\newpage

\end{document}